\documentclass[letterpaper,journal]{IEEEtran}
\usepackage{amsmath,amsfonts}
\usepackage{algorithmic}
\usepackage{algorithm}
\usepackage{array}
\usepackage[caption=false,font=normalsize,labelfont=sf,textfont=sf]{subfig}
\usepackage{textcomp}
\usepackage{stfloats}
\usepackage{url}
\usepackage{verbatim}
\usepackage{graphicx}
\usepackage{cite}

\usepackage{graphics} 
\usepackage{epsfig} 
\usepackage{mathptmx} 
\usepackage{amsthm}
\usepackage{amsmath} 
\usepackage{amssymb}  
\usepackage{cuted}
\usepackage{float}
\usepackage{color}
\usepackage{url}
\usepackage{bm}

\usepackage{pifont}
\usepackage{makecell}
\usepackage{booktabs}
\usepackage{tikz}
\usetikzlibrary{shapes, arrows.meta, positioning, decorations.pathreplacing, patterns}
\usepackage{tabularx}
\usepackage[export]{adjustbox}   
\usepackage{cite}

\newtheorem{theorem}{Theorem}

\newtheorem{lemma}{Lemma}

\newtheorem{remark}{Remark}

\begin{document}

\title{STARDIS: Strategic Scheduling and Deceptive Signaling for Satellite Intrusion Detection System Deployment}

\author{Yuzhou Xiao\IEEEauthorrefmark{1}\thanks{\IEEEauthorrefmark{1}Linan Huang and Yuzhou Xiao are co-first authors and contributed equally to this work.},
	Linan Huang\IEEEauthorrefmark{1}\textsuperscript{,}\IEEEauthorrefmark{2},~\IEEEmembership{Member,~IEEE},
	Jiachen Sun,~\IEEEmembership{Student Member,~IEEE}, 
	Peilong Liu,~\IEEEmembership{Member,~IEEE}, 
	Chunxiao Jiang,~\IEEEmembership{Fellow,~IEEE}, 
	Linling Kuang\IEEEauthorrefmark{2}\thanks{\IEEEauthorrefmark{2}Corresponding authors: Linan Huang (email: huanglinan@tsinghua.edu.cn) and Linling Kuang (email: kll@tsinghua.edu.cn).},~\IEEEmembership{Member,~IEEE} 

    \textit{This article has been accepted for publication in the IEEE Journal on Selected Areas in Communications. Copyright may be transferred without notice, after which this version may no longer be accessible.}
    
	\thanks{This work was supported in part by the National Natural Science Foundation of China under Grant 62401313, Grant 62325108, and Grant 62341131.}
	\thanks{The Project Supported by National Natural Science Foundation of China No.62341109, No.62341106, Shanghai Municipal Science and Technology Major Project, and Tsinghua University Initiative Scientific Research Program.}

	\thanks{Y. Xiao is  with the School of Automation, Beijing Institute of
Technology, Beijing 100081, China (e-mail: y.xiao@bit.edu.cn)} 	
        \thanks{
            L. Huang, P. Liu, C. Jiang, and L. Kuang are with the Beijing National Research Center for Information Science and Technology (BNRist), Tsinghua University, Beijing 100084, China (e-mail:huanglinan@tsinghua.edu.cn; plliu@tsinghua.edu.cn; jchx@tsinghua.edu.cn; kll@tsinghua.edu.cn\}). 
		}
		\thanks{J. Sun is with the Department of Electronic Engineering, Tsinghua University, Beijing 100084, China (e-mail: sjc20@mails.tsinghua.edu.cn).}
		\thanks{C. Jiang and L. Kuang are also with State Key Laboratory of Space Network and Communications, Tsinghua University, Beijing, China.}
	}

\markboth{IEEE Journal on Selected Areas in Communications,~Vol.~xx, No.~x, July~2025}%
{Shell \MakeLowercase{\textit{et al.}}: A Sample Article Using IEEEtran.cls for IEEE Journals}


\maketitle

\begin{abstract}
	Satellite communication networks operate under stringent computational constraints and are susceptible to sophisticated cyberattacks. This paper introduces a novel defense framework that decouples security optimization into ground-based analysis and onboard real-time execution. In the long-term loop, the ground segment processes historical data to estimate key statistical parameters of the task environment. 
	\textbf{{Additionally, we incorporate the time-varying characteristics of satellite wireless links to account for the dynamic communication context.}} 
	In the short-term loop, the satellite employs a receding horizon optimization that models dynamic task arrivals and maximizes a utility function considering detection rates and resource costs. To counter intelligent adversaries interception, we introduce a deception mechanism using Bayesian persuasion theory. By strategically manipulating the short-term action sequences in the telemetry downlink, we mislead an external attacker's beliefs. We mathematically model the attacker's optimal response under \textbf{{channel uncertainty}} and demonstrate that our framework significantly reduces attacker utility. The approach's effectiveness is formally proven using Lyapunov theory.
\end{abstract}

\begin{IEEEkeywords}
Satellite networks, intrusion detection, Bayesian persuasion, game theory, resource management
\end{IEEEkeywords}

\section{Introduction}

Satellite internet is rapidly evolving into a cornerstone of global connectivity, enabling applications ranging from 6G and Internet of Things (IoT) to mission-critical services across industries \cite{Choi2024, Torrens2024}. The transition from isolated, closed-loop architectures to large-scale, IP-based infrastructures has critically broadened the attack surface of satellite systems \cite{Manulis2021, Liu2016, Hao2023, Li2020}. High-profile incidents, such as the malware attack on Viasat modems, {highlight} the {real} cyber-physical threats to modern satellite networks \cite{Liu2016, Yu2024, Swope2025, Javadpour2024}.

Securing satellite internet demands a paradigm shift from traditional, terrestrial-centered security models \cite{Rodrigues2023}. While ground segments and user segments are essential components, they represent inherently flawed vantage points for comprehensive threat detection. Ground-based Intrusion Detection Systems (IDS) are fundamentally constrained by the high-latency and bandwidth-limited satellite-to-ground link, rendering them ineffective against real-time, intra-satellite threats. Concurrently, user terminals, being geographically dispersed and often physically insecure, are susceptible to compromise and thus cannot serve as a reliable foundation for system-wide security monitoring.

This challenge is magnified by the accelerating trend of migrating core functionalities directly into the space segment. For instance, next-generation satellite architectures are increasingly designed to host core network functions and advanced data processing on board \cite{5GAmericas2025NTN}. Such a shift means that sophisticated attacks can be initiated and executed entirely within the space segment, bypassing terrestrial defenses completely. Therefore, to ensure timely detection and enable autonomous response, embedding security capabilities directly within the satellite payload is not merely an enhancement, but a critical necessity \cite{Mao2024}. This directly motivates the development of on-board IDS. The architecture of the user segment, space segment, ground segment, and the reasons for deploying IDs on board are shown in Fig. \ref{fig:multidomainStructure}.

\begin{figure}[htbp]
	\centering
	\includegraphics[width=1 \columnwidth]{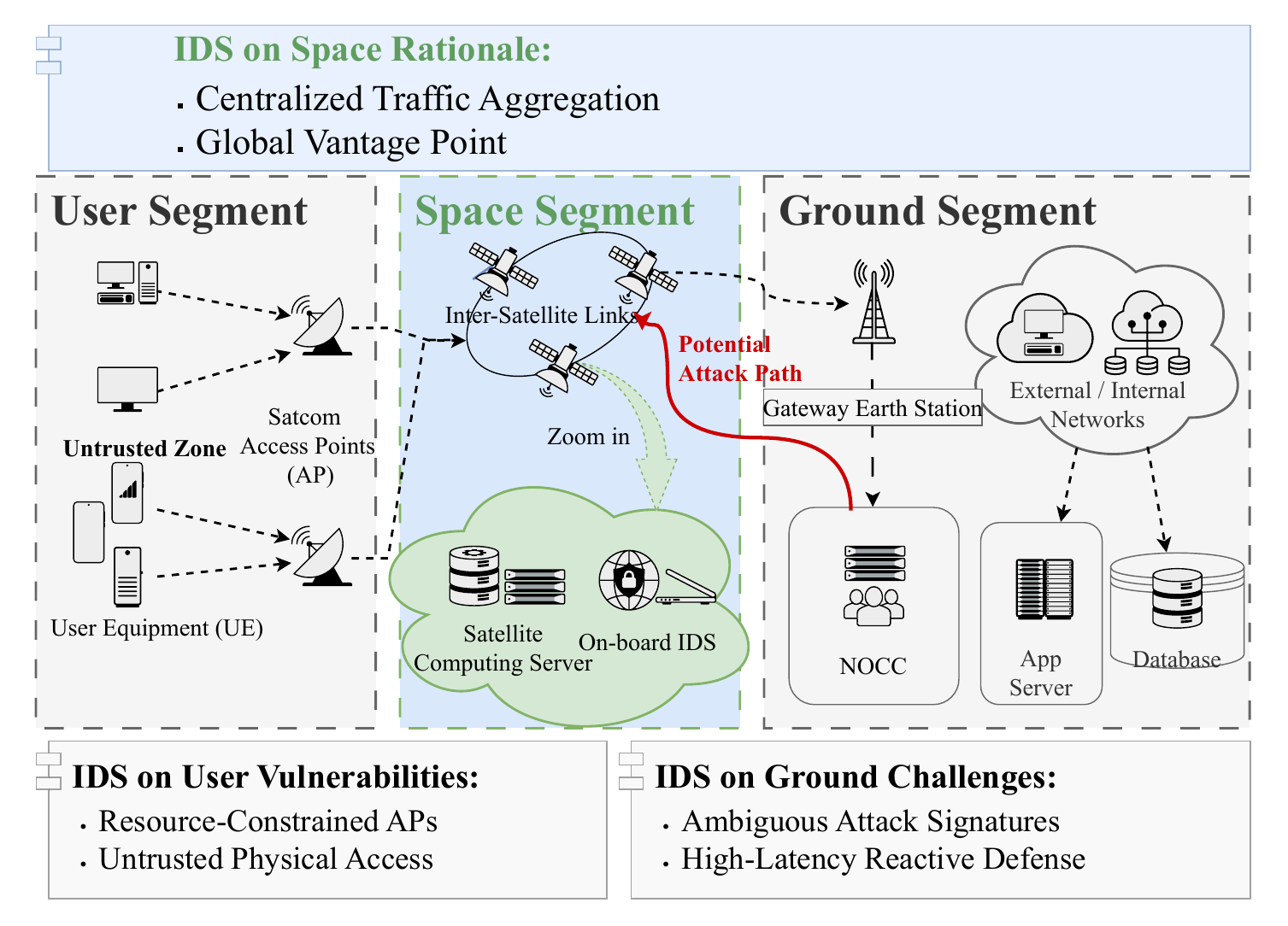}
	\caption{ 
		The necessity of deploying IDS in the space segment.
}
\label{fig:multidomainStructure}

\end{figure}
By placing an IDS on board, threat data can be processed in situ, allowing for immediate defensive actions while conserving bandwidth \cite{Zheng2018, Yao2025}. However, this approach creates a fundamental conflict: the high computational demands of modern IDS are at odds with the strict Size, Weight, and Power (SWaP) constraints of satellite hardware \cite{Zheng2018, Roy2024}. 
To address the strict Size, Weight, and Power (SWaP) constraints of satellite platforms, a two-pronged approach is envisioned. A foundational strategy involves adapting lightweight IDS techniques, originally conceived for resource-limited settings like terrestrial IoT, to engineer efficient on-board detection modules \cite{Roy2024, Sai2021}. However, a more profound strategy emerges from exploiting the intrinsic variability of satellite mission workloads. This allows security to be reframed as a schedulable, mission-aware task within a unified resource management framework \cite{Sprunt1989, Li2023a}. Consequently, the IDS can dynamically modulate its operational intensity, escalating to high-fidelity, computationally-intensive analysis during periods of resource abundance, and reverting to baseline, low-overhead monitoring when mission-critical functions demand priority. This strategic scheduling fosters a truly resource-aware and service-aware security posture, dynamically aligning defensive rigor with both operational tempo and real-time resource capacity.

Yet this optimization itself introduces a critical vulnerability: by design, an optimal policy tends to be predictable. A strategic adversary can exploit this predictability, learning the scheduling patterns and launching attacks during low-surveillance windows—effectively transforming resource-efficient defense into a time-sensitive attack surface.
This exposes a fundamental asymmetry: while the defender seeks efficiency, the attacker exploits regularity. To counter such strategic exploitation, a purely reactive or static defense is insufficient. Instead, deceptive defense emerges as a compelling paradigm—one that breaks the attacker’s assumptions by introducing controlled unpredictability into the system. By strategically obfuscating the IDS scheduling policy, or even simulating false detection patterns, the satellite can manipulate the adversary’s belief and induce suboptimal attack timing. 

{Unlike wired terrestrial networks, satellite communications rely on open wireless channels that are inherently vulnerable to interception. However, these channels are also subject to rapid quality fluctuations due to high mobility, path loss, and atmospheric fading. An intelligent attacker attempting to intercept telemetry to plan attacks faces significant physical layer challenges. The received signal may be delayed, distorted, or entirely missing due to deep fading or shadowing effects. Previous works often ignored these physical constraints, assuming perfect communication links. This paper fills that gap by analyzing how channel fluctuations impact the attacker's information acquisition and proposing a method to actively use these fluctuations—alongside artificial delays—as a defensive tool.}

To this end, we propose STARDIS (Strategic Scheduling and Deceptive Signaling), a novel game-theoretic framework integrating two synergistic components. 
The first component, \textbf{STAR (Strategic Scheduling)}, establishes a unified resource allocation framework to resolve the inherent contention between routine mission operations and high-priority security tasks. This framework formally models the satellite's workload by co-scheduling two distinct task types—predictable mission workflows and stochastic security responses—as competitors within a single optimization problem \cite{Li2023a}. The solution yields an optimal, resource-aware IDS policy that intelligently governs the allocation of the satellite's finite computational resources.
Building upon this, the second component, \textbf{DIS (Deceptive Signaling)}, introduces a protective layer of strategic deception. This layer is founded on Bayesian Persuasion theory \cite{Kamenica2011, Bergemann2016, Sayin2021}, where the defender (the ``Sender'') commits to a carefully engineered signaling scheme embedded in public telemetry. While correlated with the true IDS state, the signals are designed to manipulate a rational attacker's (the ``Receiver's'') beliefs \cite{Crawford1982, Xu2015} and persuade them to choose an attack strategy favorable to the defender \cite{Zhu2013}. As a key theoretical contribution, we leverage Lyapunov stability theory to rigorously prove the stability of the resulting persuasive equilibrium, guaranteeing the long-term suppression of attacker utility and enhancement of the defender's security posture.
Fig. \ref{fig:framework_overview} illustrates our proactive defense framework.
\begin{figure}[t]
	\centering
	\includegraphics[width=1 \columnwidth]{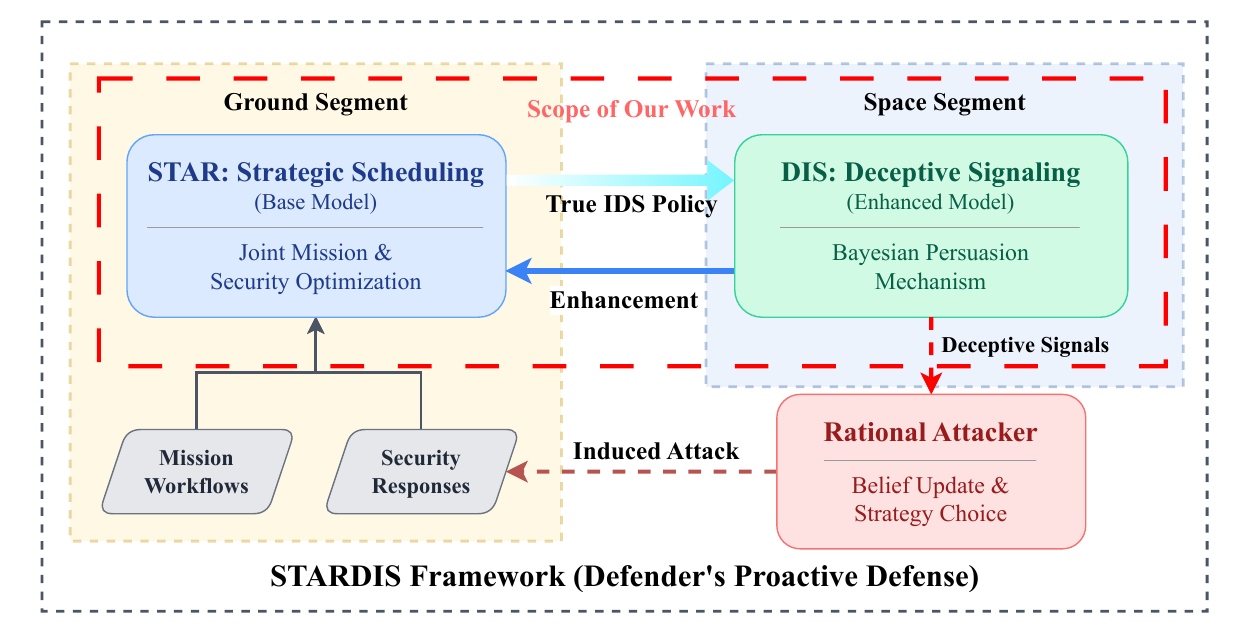}
	\caption{
	Overview of the STARDIS proactive defense framework. The ground-based STAR module optimally co-schedules mission and security tasks to determine a true security policy. This policy informs the space-based DIS module, which uses Bayesian persuasion to generate deceptive public signals. These signals manipulate a rational attacker's beliefs, steering their actions toward outcomes that are advantageous for the defender and feeding back into the defense cycle.}
	\label{fig:framework_overview}
	
\end{figure}

The main contributions of this paper are summarized as follows:
\begin{enumerate}
	\item \textbf{Unified Security and Mission-Task Optimization Framework:} We propose a novel ground-satellite coordinated defense framework that integrates intrusion detection into the satellite's native multi-task scheduling system. By abstracting the IDS as schedulable periodic and aperiodic tasks, our approach enables dynamic optimization of detection performance versus resource consumption. The long-term optimal policy is computed on the ground and executed via a lightweight on-board decision-maker, realizing a resource-aware and responsive security architecture.
	
	\item \textbf{Strategic Deception via Channel-Aware Bayesian Persuasion:} We introduce a strategic deception mechanism based on Bayesian persuasion theory that integrates a realistic wireless channel model accounting for fading, noise, and latency. This approach models the dynamic interplay of beliefs between the defender and the attacker under channel uncertainty. It allows the satellite to generate a persuasive signaling policy that manipulates the attacker's beliefs, guiding them toward suboptimal actions by turning the physical properties of the link into a defensive tool.
	
	\item \textbf{Credibility-Constrained Deception and Stability Guarantee:} To resolve the inherent conflict between deception and trustworthiness, we design a novel credibility budget using Kullback-Leibler (KL) divergence to constrain the information manipulation. Furthermore, we provide a rigorous theoretical guarantee for the long-term stability and effectiveness of our strategy. By constructing a Lyapunov function from the attacker's maximum expected utility, we formally prove that our signaling policy forces the system to converge to a state where the adversary's impact is minimized.
\end{enumerate}

\section{Related Work}
Our research builds on two primary domains: the challenge of implementing robust security within the resource-constrained environment of satellites, and the use of advanced game-theoretic methods for proactive defense. {Specifically, the rapid evolution of Space-Air-Ground Integrated Networks (SAGIN) facilitates ubiquitous connectivity but simultaneously introduces complex security challenges due to the integration of heterogeneous segments. In such dynamic environments, efficient collaborative offloading and resource allocation are fundamental not only for mission performance but also for sustaining robust defense mechanisms \cite{Huang2025}.} This section reviews the state of the art in these areas to contextualize our contributions.

\subsection{Resource-Aware Intrusion Detection in Satellite Networks}
The shift to IP-based Low Earth Orbit (LEO) mega-constellations has made satellite networks critical infrastructure, yet has also exposed them to terrestrial cyber threats and expanded the attack surface \cite{Manulis2021}. Documented incidents like the Viasat modem attack, alongside persistent jamming and spoofing, confirm that sophisticated actors are actively targeting these systems with threats such as Denial-of-Service (DoS) and selective forwarding attacks \cite{Liu2016, Swope2025, Yu2024, Javadpour2024, Sun2025, Sedjelmaci2024}. While some research addresses these risks via physical layer security or cryptography \cite{Liu2016}, our focus is on real-time, on-board operational security.

Deploying an IDS on a satellite is vital for timely response but is challenged by severe SWaP constraints and the use of less powerful, radiation-hardened processors \cite{Zheng2018}. This has spurred research into ``lightweight'' IDS, which minimize computational footprints using techniques like feature selection, often borrowing from IoT and vehicular network domains \cite{Roy2024, Sai2021, Sedjelmaci2024}. However, these designs typically assume a static resource budget and are not adaptive. Concurrently, the field of dynamic resource management in satellites has well-established models for scheduling periodic and aperiodic tasks to optimize resource utilization \cite{Hao2023, Li2020, Sprunt1989, Li2023a}. The primary gap in the literature, which our work addresses, is that these resource management frameworks do not treat security as a schedulable, resource-consuming task. We bridge this gap by designing an IDS whose operational intensity is the output of a dynamic optimization problem, thus making security truly resource-aware.

\subsection{Game-Theoretic Deception for Proactive Defense}
Against strategic adversaries who learn and adapt, reactive security is insufficient. Game theory provides a mathematical framework for modeling these conflicts, leading to proactive defenses like Moving Target Defense and cyber deception \cite{Tambe2011, Carroll2011, Javadpour2024, Ma2022, Huang2020}. However, a sophisticated attacker may anticipate simple deception, rendering it ineffective \cite{Huang2021}. This {requires} more advanced methods that exploit the inherent information asymmetry in cyber warfare \cite{Xu2015}.

Our work advances this paradigm by employing Bayesian Persuasion (BP), a form of information design \cite{Kamenica2011}. In the BP framework, a defender (``Sender'') commits to a signaling policy that is correlated with the true state of the world, designed to persuade a rational, Bayesian attacker (``Receiver'') to take actions favorable to the defender \cite{Bergemann2016}. This commitment distinguishes BP from classic cheap talk signaling games \cite{Crawford1982}. While the concept has been explored for cyber-physical systems \cite{Sayin2021, Zhu2013}, its application to protect an adaptive defense policy from being learned and exploited is novel. {We extend this interaction into a Bayesian persuasion framework} and, critically, provide rigorous theoretical guarantees for its stability using Lyapunov theory, a tool for analyzing equilibria in dynamic systems \cite{Neely2010, Clempner2018}. 
This synthesis of resource optimization, information-theoretic deception, and stability analysis represents a holistic advance in satellite security. Table \ref{tab:related_work_comparison} contrasts our proposed framework with existing works across several key capability dimensions.

\begin{table*}[t]
	\centering
	\caption{{Comparison of Our Work with the State of the Art in Communication Security}}
	\label{tab:related_work_comparison}
	\begin{tabularx}{\textwidth}{l >{\centering\arraybackslash}X >{\centering\arraybackslash}X >{\centering\arraybackslash}X >{\centering\arraybackslash}X >{\centering\arraybackslash}X}
		\toprule
		Existing Works & Satellite-Specific Context & Dynamic Resource-Awareness & Models Strategic Adversary & Info-Theoretic Deception & Theoretical Guarantee \\
		\midrule
		\cite{Javadpour2024} & \ding{55} & \ding{55} & \ding{51} & \ding{51} & \ding{55} \\
		\addlinespace
		\cite{Roy2024} & \ding{55} & \ding{55} & \ding{55} & \ding{55} & \ding{55} \\
		\addlinespace
		\cite{Sai2021} & \ding{55} & \ding{55} & \ding{55} & \ding{55} & \ding{55} \\
		\addlinespace
		\cite{Li2023a} & \ding{51} & \ding{51} & \ding{55} & \ding{55} & \ding{51} \\
		\addlinespace
		\cite{Sayin2021} & \ding{55} & \ding{55} & \ding{51} & \ding{51} & \ding{51} \\
		\addlinespace
		\cite{Huang2021} & \ding{55} & \ding{55} & \ding{51} & Partial & \ding{51} \\
		\addlinespace	
		This work & \ding{51} & \ding{51} & \ding{51} & \ding{51} & \ding{51} \\
		\bottomrule
	\end{tabularx}
\end{table*}

\section{System Modeling and Architecture}

To realize an intelligent on-board defense, we should first accurately model the operational environment of the satellite, including its hardware architecture, task characteristics, communication links, and the threat model. This section establishes the foundational \textit{system model} that describes how the ground and space segments collaborate, how tasks are prioritized and queued under heterogeneous resource constraints, and how the physical channel and adversary are mathematically characterized.

\subsection{Two-Timescale Architecture}

Our strategy is built upon a two-timescale architecture, as depicted in Fig.~\ref{fig:DualLoop}, which decouples the scheduling problem into a ground-based, long-term analysis and an onboard, short-term real-time optimization. This separation is crucial for managing computational complexity while maintaining adaptability in a dynamic satellite environment.

\begin{figure}[hpbt!]
	\centering
	\includegraphics[width=1 \columnwidth]{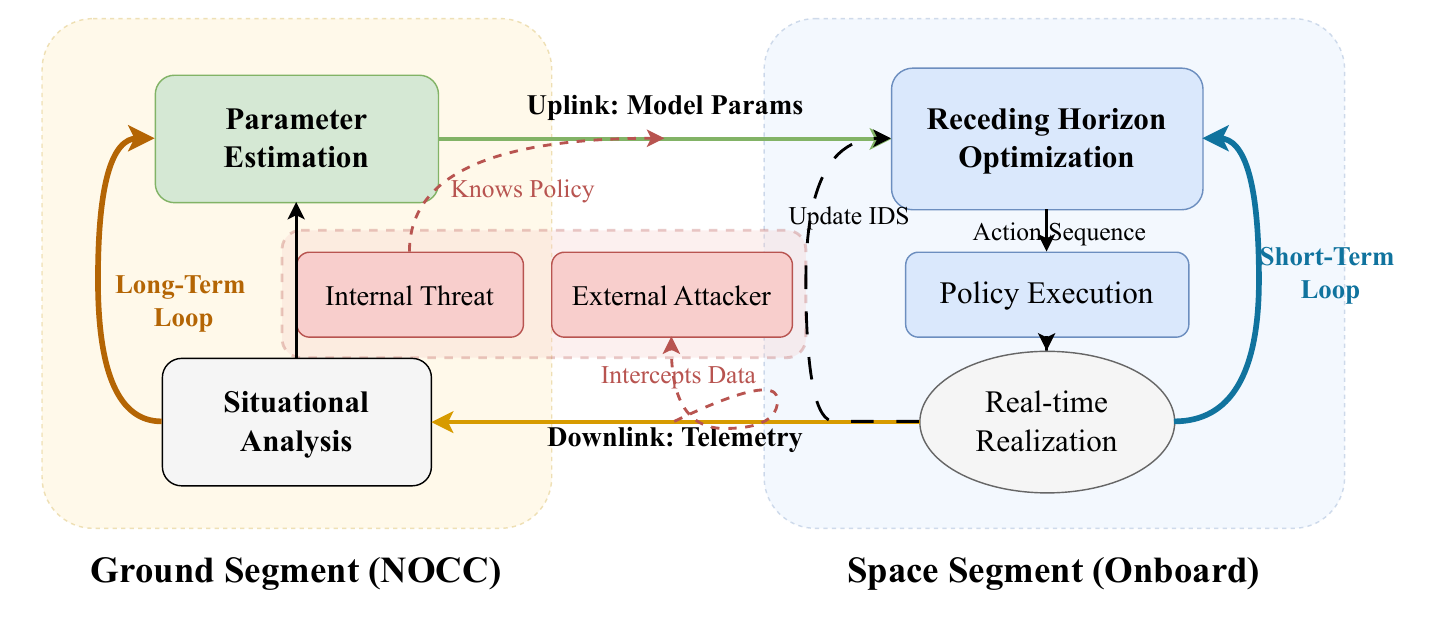}
	\caption{The proposed two-timescale architecture. The ground segment performs long-term parameter estimation and uplinks model parameters to the satellite. Onboard, the space segment employs short-term receding horizon optimization to generate real-time action sequences. The framework considers an internal threat accessing the policy and an external attacker intercepting the telemetry.}
	\label{fig:DualLoop}
\end{figure}

\noindent\textbf{Ground-Based Computation (Long-Term Loop):} {The ground segment, functioning as the Network Operations Control Center (NOCC), is responsible for the following offline}, long-timescale operations:

\begin{enumerate}
	\item \textbf{Data Analysis:} It processes extensive historical mission data to build and validate statistical models of the operational environment.
	
	\item \textbf{Parameter Estimation:} From this analysis, it estimates key statistical parameters, most notably the average arrival rates $\lambda_j$ for aperiodic tasks. This process establishes the long-term operational requirements needed to ensure system stability.
	
	\item \textbf{Parameter Uplink:} The output is not a fixed schedule, but a set of validated model parameters which are then uplinked to the satellite to guide its real-time decisions.
\end{enumerate}

{It is important to note that due to the significant difference in timescales, the ground segment does not dynamically react to instantaneous attacker actions within the short-term optimization window. Consequently, in our simulation model, the parameters provided by the ground are treated as fixed constants over the execution horizon $N_t$.}

\noindent\textbf{Onboard Real-Time Optimization (Short-Term Loop):} 
The satellite operates on a fast timescale, employing a receding horizon control strategy. 
At the beginning of each decision horizon, the onboard scheduler solves a deterministic optimization problem for the upcoming time window. 
The primary \textbf{inputs} to this problem are the statistical model parameters provided by the ground (e.g., average task arrival rates), fixed task specifications (e.g., resource demands and execution deadlines), and the real-time system state. 
The \textbf{outputs} of the optimization are the optimal binary scheduling sequences that dictate which mission and security tasks to execute in each time slot. 
This approach makes the problem computationally tractable for onboard processors while allowing the satellite to react swiftly to operational realities.

{
	Specifically, the onboard resource allocation problem involves binary decision variables and nonlinear coupling, falling into the class of Mixed-Integer Nonlinear Programming (MINLP), which is generally NP-hard.
	Furthermore, the derivation of the optimal signaling policy requires solving a global convex optimization problem over high-dimensional belief spaces, while the channel-aware defense strategy requires complex stochastic integration of Shadowed Rician fading statistics.
	Performing these heavy computations onboard is prohibitive due to the strict SWaP constraints and the limited capability of radiation-hardened satellite processors \cite{Tang2021, Xiu2025}.
	Therefore, offloading these computationally intensive tasks to the ground segment is necessary, allowing the satellite to focus solely on the lightweight receding horizon execution.}

\subsection{Unified Task and Queue Model}

{Unlike terrestrial data centers with abundant homogeneous servers, satellites operate with strictly constrained and heterogeneous hardware. We model the satellite's onboard computing platform as a system comprising a set of distinct resource types, denoted by $\mathcal{K} = \{1, \dots, K\}$ (e.g., General Purpose Processors (CPU) for logic control and Field Programmable Gate Arrays (FPGA) or GPUs for parallel signal processing).}
The system operates over a discrete-time horizon of $N_t$ slots, indexed by $t \in \{0, 1, \dots, N_t-1\}$.

Tasks are defined along three orthogonal dimensions:
 \textbf{task nature}, \textbf{priority level}, and \textbf{arrival pattern}. This multi-dimensional classification allows for a more granular and realistic representation of the satellite's operational state. The 3D task classification diagram can be seen in Fig. \ref{fig:3DTask}.
\begin{figure}[hpbt!]
	\centering
	\includegraphics[width=1 \columnwidth]{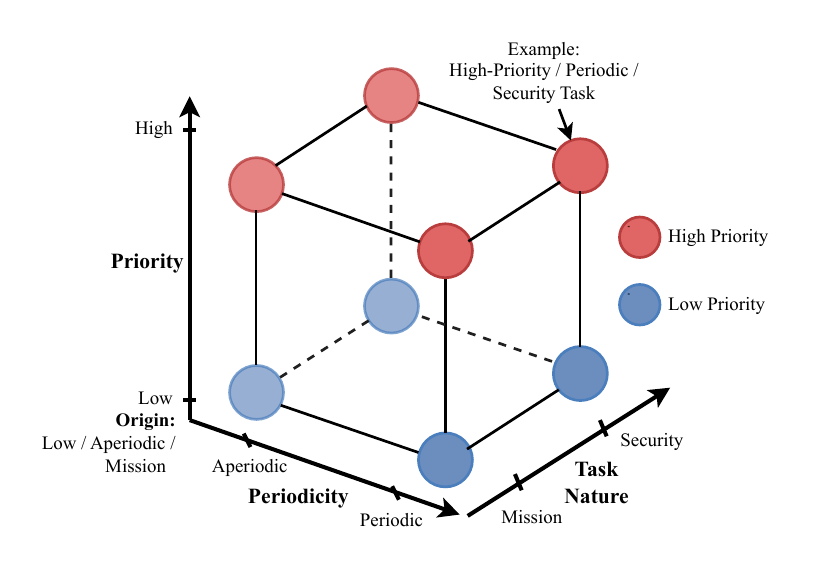}
	\caption{ Task classification for 3 dimensions. }
	\label{fig:3DTask}
	
\end{figure}

\begin{enumerate}
	\item \textbf{Task Nature:} This dimension distinguishes between the fundamental objectives of a task.
	\begin{itemize}
		\item \textit{Mission Tasks ($\mathcal{T}_M$):} These encompass the satellite's primary operational functions, such as remote sensing data processing, user-requested computations, or delay-tolerant data relaying.
		\item \textit{Security Tasks ($\mathcal{T}_S$):} These are initiated by the onboard IDS to protect the satellite. Examples include deep packet inspection, cryptographic verification, anomaly analysis, or executing countermeasures.
	\end{itemize}
	
	\item \textbf{Priority Level:} This dimension dictates a task's scheduling precedence and its ability to preempt other tasks.
	\begin{itemize}
		\item \textit{High-Priority (Preemptive):} These tasks are critical and can interrupt the execution of lower-priority tasks to gain immediate access to computational resources.
		\item \textit{Low-Priority (Preemptible):} These tasks can be suspended or aborted by the scheduler to accommodate high-priority tasks.
	\end{itemize}
	
	\item \textbf{Arrival Pattern:} This dimension characterizes the predictability of a task's arrival.
	\begin{itemize}
		\item \textit{Periodic (Deterministic):} These tasks arrive at regular, predictable intervals. Their scheduling can be planned in advance.
		\item \textit{Aperiodic (Stochastic):} These tasks arrive unpredictably, often modeled by a stochastic process (e.g., a Poisson distribution), reflecting events like ad-hoc user requests or random cyberattacks.
	\end{itemize}
\end{enumerate}

The combination of these three dimensions creates a spectrum of eight distinct task categories. This framework moves beyond a rigid mission-versus-security view, enabling more sophisticated scheduling decisions. For instance, a critical but non-security-related mission task (e.g., an emergency orientation adjustment) can be modeled as a \textit{high-priority, aperiodic mission task}. Conversely, a routine security log audit could be a \textit{low-priority, periodic security task}. 

Consequently, the most pressing threats are encapsulated as \textit{high-priority, aperiodic security tasks}, which demand immediate, preemptive execution to preserve system integrity. {Specifically, a primary example of such urgency is the real-time validation of telecommands. As emphasized by He \textit{et al.} \cite{He2025}, the injection of spoofed uplink commands can result in unauthorized orbit maneuvers or payload misuse, potentially leading to the irreversible physical loss of satellite control; thus, command verification should be strictly enforced as a hard real-time constraint before execution. In parallel, the onboard system should contend with software-driven threats such as malware propagation. As noted in comprehensive surveys by Guo \textit{et al.} \cite{Guo2022}, modern satellites within space-air-ground integrated networks are increasingly vulnerable to ransomware-like behaviors. Upon the detection of such anomalies, the immediate isolation of the compromised process is mandatory to contain the breach and prevent system-wide paralysis. These tasks are therefore assigned the highest scheduling priority to mitigate risks that could otherwise be catastrophic.}

Each individual task $S_j$ across all categories is uniformly characterized by a tuple:
\begin{equation}
	S_j := ({\text{req}_j,\;a_j,\;d_j,\;p_j}),
\end{equation}
where ${\text{req}_j}$ is the request arrival time, ${a_j}$ is the earliest feasible start time, ${d_j}$ is its execution deadline, and ${p_j}$ is the required processing time {(i.e., the net resource occupancy duration). In Fig. \ref{fig:system_model}, the ``Task Execution'' bar visually represents the total turnaround time, which may be longer than $p_j$ if the task experiences suspension due to preemption.}
A task is defined to be schedulable if ${d_j - a_j \ge p_j}$.

As depicted in Fig. \ref{fig:system_model}, all incoming tasks first enter a global input queue $Q$. An admission controller then assesses their schedulability (i.e., if $d_i - t \ge p_i$ at the current time $t$) before moving them to buffers.
{To reflect specific hardware affinities, the scheduler dispatches tasks to the appropriate resource queues (e.g., dispatching image matrix operations to the FPGA queue).}
\begin{figure}[hpbt!]
	\centering
	\includegraphics[width=1 \columnwidth]{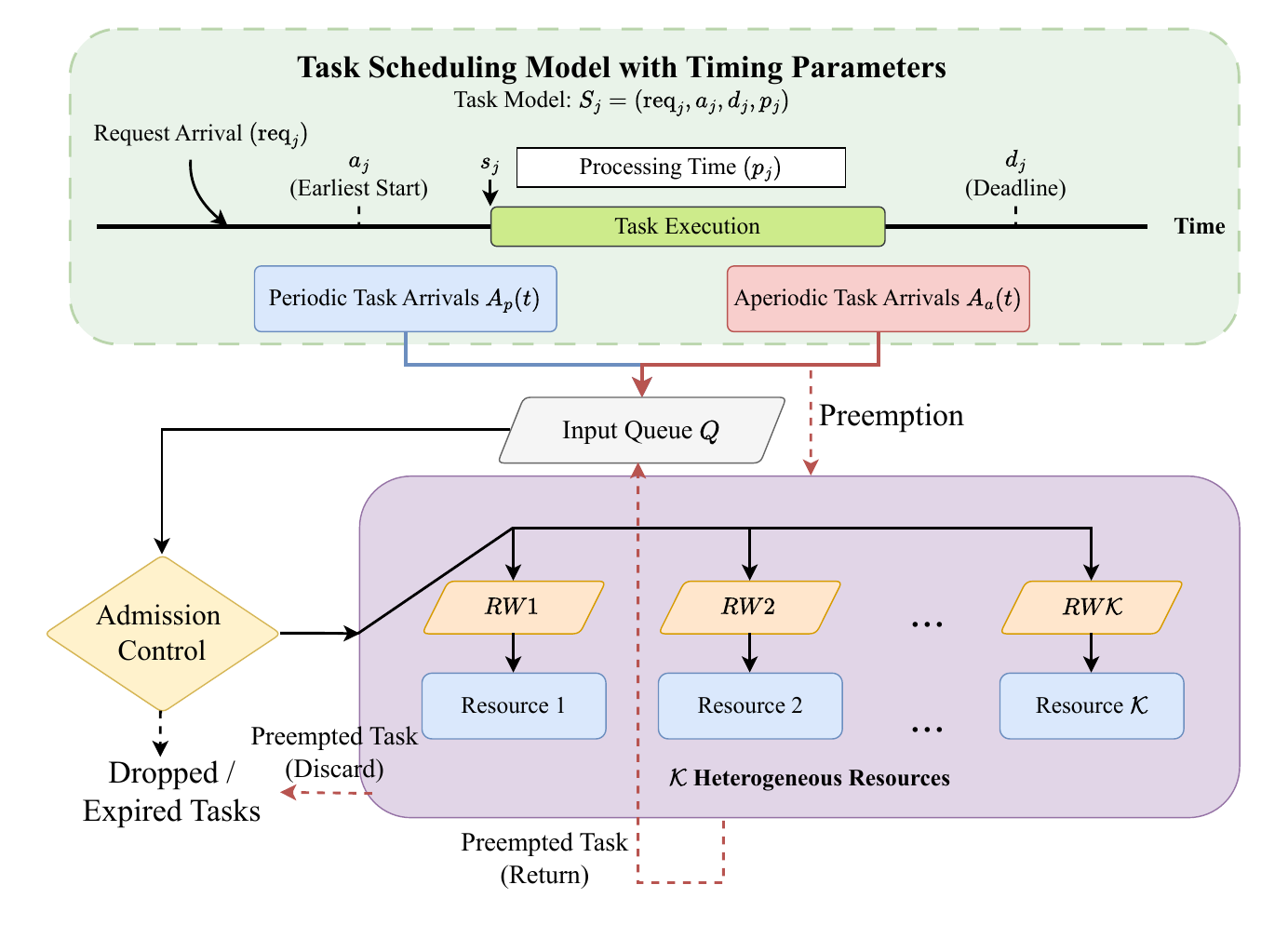}
	\caption{
		The proposed unified scheduling model. Periodic and aperiodic tasks arrive in a common queue \(Q\). Feasible tasks are admitted to buffers, {designated as Ready Workload (\(RW\)), for heterogeneous resources.} High-priority tasks can preempt ongoing low-priority tasks, ensuring immediate response to critical events. The timing parameters for any task \(S_j\) are shown below.}
	\label{fig:system_model}
\end{figure}
A critical feature of our unified model is \textbf{preemptive priority}.
The arrival of any high-priority task grants it immediate access to a computational resource, even if it means preempting an ongoing, lower-priority task.
The preempted task can be either discarded or returned to the buffer for later rescheduling.
This mechanism ensures that urgent events, whether mission-critical or security-related, are handled with minimal delay, directly modeling the core challenge of balancing operational integrity with security assurance.
{While Fig. \ref{fig:system_model} depicts preemption initiated by an aperiodic task to highlight the handling of unpredictable bursts, the preemption logic is general and applies to any high-priority task regardless of its arrival pattern.}

\subsubsection{Resource Demand and System State Definition}
{To capture hardware heterogeneity, each task $j$ is characterized by a resource demand vector $\mathbf{r}_j = \{r_{j,k}\}_{k \in \mathcal{K}}$, where $r_{j,k} \in [0,1]$ is the normalized resource usage.
The scheduler's decisions are represented by binary variables: $x_{j}(t) \in \{0,1\}$ for mission tasks and $x_{\text{scan}}(t) \in \{0,1\}$ for the IDS scan.

Crucially, the instantaneous load of the satellite determines the "opportunity" for attacks. We define the \textbf{idle capacity} at time $t$, denoted by $z(t)$, as the minimum fraction of available resources across all types:
\begin{equation}
	\label{eq:def_idle_capacity_sec3}
	z(t) = \min_{k \in \mathcal{K}} \left( 1 - \sum_{j \in \mathcal{J}(t)} r_{j,k}\,x_{j}(t) - r_{s,k}\,x_{\text{scan}}(t) \right).
\end{equation}
A low value of $z(t)$ implies the system is congested, creating a vulnerability window that the attacker seeks to exploit, while a high $z(t)$ indicates available capacity for security scanning.}

\subsection{Time-Varying Wireless Channel Model} \label{sec:channel}
We consider the downlink telemetry transmission from the satellite to the ground, which an adversary attempts to intercept. Due to the high mobility of the satellite, the channel exhibits rapid time-varying characteristics.

\subsubsection{Large-Scale and Small-Scale Fading}
The received signal-to-noise ratio (SNR) at the attacker's receiver at time slot $t$, denoted by $\gamma(t)$, is expressed as:
\begin{equation}
	\gamma(t) = \frac{P_{tx} G_{tx}(\theta(t)) G_{rx} |h(t)|^2}{N_0 L_{path}(d(t))}
\end{equation}
Here, $d(t)$ is the slant distance between the satellite and the attacker, which varies according to the orbital dynamics. $L_{path}(d(t))$ represents the path loss. The term $|h(t)|$ represents the small-scale fading amplitude. We adopt the \textbf{Shadowed Rician} fading model \cite{Abdi2003}, widely used for land mobile satellite channels. The probability density function of the signal envelope $r = |h(t)|$ is given by:
\begin{equation}
	\begin{split}
		f_R(r) = & \left( \frac{2b_0 m}{2b_0 m + \Omega} \right)^m \frac{r}{b_0} e^{-\frac{r^2}{2b_0}} \\
		& \cdot \sum_{k=0}^{\infty} \frac{(1-m)_k}{ (k!)^2 } \left( \frac{\Omega r^2}{2b_0(2b_0 m + \Omega)} \right)^k
	\end{split}
\end{equation}
where $2b_0$ is the average power of the multipath component, $\Omega$ is the average power of the Line-of-Sight component, and $m$ is the Nakagami parameter characterizing the shadowing severity.

\subsubsection{Dynamic Link Quality and Erasure}
The attacker cannot successfully demodulate every packet. We model the reception as a probabilistic event. Let $\xi(t) \in \{0, 1\}$ be the erasure indicator, where $\xi(t)=1$ implies successful reception (interception). The probability of success depends on the instantaneous SNR:
\begin{equation}
	P(\xi(t)=1) = 1 - P_{out}(\gamma(t))
	\label{eq:xi}
\end{equation}
where $P_{out}$ is the outage probability when SNR falls below a decoding threshold $\gamma_{th}$.

\subsubsection{Dynamic Latency and Information Age}
The attacker receives the signal with a delay. The total delay $D(t)$ consists of propagation delay, processing delay, and our proposed strategic artificial delay:
\begin{equation}
	D(t) = \tau_{prop}(t) + \tau_{proc}(t) + \delta_{add}(t)
\end{equation}
where $\tau_{prop}(t) = d(t)/c$ varies with orbit. The term $\delta_{add}(t)$ is the \textbf{artificial delay} introduced by our STARDIS framework. By manipulating $\delta_{add}(t)$, the defender can control the "freshness" (Age of Information \cite{Kaul2012}) of the intelligence obtained by the attacker.
\color{black}

\subsection{Adversarial Attack Model} \label{sec:Adversarial}
We model the adversary's actions over a finite time horizon composed of $N_t$ discrete time slots, indexed by $t \in \{0, 1, \dots, N_t-1\}$. This formalism parallels the DoS attack models in networked control, where \emph{attack frequency} and \emph{duration} are core metrics \cite{Sun2025}. At each time slot $t$, the attacker makes a binary decision, represented by the variable $x_{\mathrm{att}}(t) \in \{0,1\}$, where $x_{\mathrm{att}}(t)=1$ signifies that an attack is actively being launched, and $x_{\mathrm{att}}(t)=0$ indicates no attack. The adversary's goal is to devise an optimal attack schedule, denoted by the sequence $\{x_{\mathrm{att}}(t)\}_{t=0}^{N_t-1}$, that maximizes a cumulative utility function. This utility function captures the inherent trade-off between the immediate rewards gained from successful attacks and the costs associated with their execution.

The attacker's time-averaged utility, $U_A$, is formulated as:
{\begin{equation}U_A = \frac{1}{N_t}\sum_{t=0}^{N_t-1} \left[ \xi(t) \cdot \alpha(1-z(t))x_{att}(t) - \beta(1+ka(t-1))x_{att}(t) \right]\end{equation}Here, $\xi(t)$ represents the binary erasure outcome defined in Eq. (\ref{eq:xi}). If $\xi(t)=0$ (signal lost), the attacker acts blindly, significantly reducing the expected reward effectiveness.}
The first term within the summation represents the reward from attacking. This reward is proportional to the satellite's ``detection gap'' $(1 - z(t))$, where $z(t) \in [0,1]$ is the fraction of idle computing capacity of the IDS at time $t$. A lower value of $z(t)$ implies that the IDS is heavily loaded, thus presenting a more opportune moment for an attack to succeed \cite{Javadpour2024}. The parameter $\alpha > 0$ is a scaling factor for this reward.

The second term constitutes a dynamic cost for launching an attack. Instead of a simple static penalty, this cost is designed to reflect the escalating risk and resource depletion from sustained offensive operations. 
The core of this mechanism is a state variable, $a(t)$, which quantifies the attacker's accumulated operational intensity.
Its evolution is governed by the first-order recursive equation:
\begin{equation}
	a(t) = (1 - {\eta})a(t-1) + {\eta} x_{\mathrm{att}}(t), \quad \text{with } a(-1) = 0.
\end{equation}
Here, the parameter ${\eta} \in (0,1]$ serves as a memory factor. When the attacker is inactive ($x_{\mathrm{att}}(t)=0$), the intensity $a(t)$ decays exponentially, modeling resource recovery or the dissipation of exposure risk. Conversely, when an attack is initiated ($x_{\mathrm{att}}(t)=1$), the intensity increases. A smaller ${\eta}$ implies a longer memory, causing costs to aggregate over extended periods, whereas a larger ${\eta}$ signifies a shorter memory. The cost at time $t$ is amplified by the intensity from the previous state, $a(t-1)$, via 
the multiplier $(1 + k a(t-1))$.
The parameter $\beta > 0$ represents the base cost of an attack, while $k>0$ scales the influence of historical actions on the current cost. This design makes consecutive attacks progressively more costly, thereby discouraging long, uninterrupted assault sequences and incentivizing more strategic, intermittent behavior.

This framework {avoids} a simplistic hard budget constraint on total attack duration (e.g., $\sum x_{\mathrm{att}}(t) \le E_{\mathrm{att}}$) in favor of this more realistic soft-cost mechanism. The dynamic cost term acts as an implicit budget, naturally penalizing overuse. Such a formulation better reflects real-world scenarios where an adversary is constrained by the escalating operational costs and risks of sustained activity rather than a fixed, predetermined resource limit \cite{Sun2025}.

\begin{remark}
	\textbf{(Rationale for Soft Cost Modeling)} We model the consumption of attack resources as a ``soft cost'' (a penalty term in the utility function) rather than a ``hard budget'' (a strict inequality constraint) to accurately reflect the nature of advanced persistent threats in satellite networks. In practice, sophisticated attackers are rarely limited by a hard energy cap, such as a battery limit. Instead, their primary constraint is the risk of exposure, which accumulates with the frequency and intensity of their actions. By formulating this as a soft cost, we capture the rational trade-off between the desire for immediate impact and the strategic need to remain stealthy to avoid detection or countermeasures. This modeling approach aligns with recent game-theoretic frameworks for cyber-physical security, which emphasize risk-oriented resource allocation and defensive deception \cite{Wan2023, Wei2025}.
\end{remark}
\color{black}

The attacker's optimization problem is to maximize its utility subject to operational constraints. The complete problem is formulated as follows:
\begin{subequations}
	\renewcommand{\theequation}{\theparentequation.\arabic{equation}}
	\begin{align}
		\max_{\{x_{\mathrm{att}}(t)\}} \quad & \frac{1}{N_t}\sum_{t=0}^{N_t-1} \Bigl[\alpha\bigl(1 - z(t)\bigr)x_{\mathrm{att}}(t) - \beta\bigl(1 + k a(t-1)\bigr)x_{\mathrm{att}}(t)\Bigr] \label{eq:attacker_objective_final} \\
		\text{s.t.}\quad & a(t) = (1 - {\eta})a(t-1) + {\eta} x_{\mathrm{att}}(t),  \label{eq:attacker_state_final} \\
		& x_{\mathrm{att}}(t) + x_{\mathrm{scan}}(t) \le 1,  \label{eq:attacker_avoid_scan_final} \\
		& x_{\mathrm{att}}(t) \in \{0,1\}, \quad \forall t \in \{0, \dots, N_t-1\}.
		\label{eq:attacker_binary_final}
	\end{align}
\end{subequations}
Constraint \eqref{eq:attacker_avoid_scan_final} enforces a crucial tactical consideration: the attacker refrains from launching an attack when the IDS is performing an active scan ($x_{\mathrm{scan}}(t)=1$). Constraint \eqref{eq:attacker_binary_final} ensures the binary nature of the attack decision at each time slot. To solve this problem, the adversary should execute a sophisticated strategy, dynamically timing its attacks to coincide with periods of low IDS capacity (low $z(t)$) while interspersing these with periods of inactivity to allow the accumulated cost intensity $a(t)$ to decay, thereby maximizing long-term utility.
\begin{remark}
	{\textbf{(Telemetry Interception)}}
	Our adversarial model assumes the attacker can intercept telemetry data to learn the IDS configuration. This assumption is grounded in two operational realities. First, telemetry is the primary mechanism for ground operators to monitor the real-time operational state of critical onboard subsystems, including dynamically scheduled security functions \cite{Yip2023}. This requires the inclusion of IDS configuration data within the telemetry stream for performance validation. Second, although satellite downlinks are typically encrypted, their security is often undermined by inadequate key management practices. {Many satellites employ weak cryptographic keys or fail to perform regular key updates (i.e., static keys), rendering the encryption susceptible to cryptanalysis \cite{Tedeschi2022}.} Furthermore, the broadcast nature of the link and vulnerabilities within the ground segment provide plausible interception vectors. Real-world incidents have demonstrated that compromising ground infrastructure can grant access to sensitive data, bypassing the need to break link-layer encryption \cite{Saha2019}.
\end{remark}

\section{STAR: Strategic Resource Scheduling Optimization}

Based on the system model established in Section III, the core challenge for the defender is to optimally allocate finite onboard resources. The STAR component addresses this by formulating a dynamic optimization problem that resolves the contention between routine mission operations and urgent security responses. This section details the construction of the security utility function, the mathematical formulation of the receding horizon optimization, and the lightweight heuristic solver.

\subsection{IDS Performance and Utility Function}

The core challenge is to balance the tangible benefit of robust intrusion detection against the inherent cost of resource consumption. We model this trade-off through a carefully constructed utility function. The IDS effectiveness is parameterized by its average scan frequency $f$ and the per-scan duration, $d_s$. We posit that the detection performance, $y$, is a monotonically increasing and saturating function of the total scan effort $f d_s$. This captures the principle of diminishing returns, where initial increases in scanning yield significant security gains, but the marginal benefit decreases as scanning becomes more aggressive. A logistic function is a suitable candidate for this relationship:
\begin{equation}
	y(f,d_s) = \frac{y_{\max}}{1 + e^{-k (f d_s - \theta)}},
\end{equation}
where $y_{\max}$ is the maximum achievable performance, and $k, \theta > 0$ are shaping parameters.

We aim to maximize a time-averaged utility, $U(t)$, that encapsulates the system's overall desirability at each time slot. We employ a quadratic utility function, as its convexity facilitates analysis and penalizes large deviations more significantly. The function is composed of three distinct terms. The first term, $\alpha\,y(f,d_s)^2$, constitutes the primary reward for achieving a high level of detection performance. The second term, $\beta\,x_{\text{scan}}(t)^2$, represents the direct, intrinsic cost of activating a scan, such as its energy consumption or CPU-cycle expenditure. Finally, and most critically, the third term, $\gamma\,y(f,d_s)^2(1-z(t))$, serves as an adaptive, context-aware penalty. This term becomes substantial only when high-performance scans (large $y$) are scheduled during periods of high system load (small $z$), thus intelligently discouraging IDS activation when resources are scarce and promoting symbiotic coexistence with payload tasks.

\subsection{Onboard Receding Horizon Optimization Formulation}

With the utility function defined in Sec. IV-A and the system state $z(t)$ defined in Eq. \eqref{eq:def_idle_capacity_sec3}, we can now formulate the onboard resource allocation problem.
The goal is to determine the optimal sequence of decision variables $\{x_j(t), x_{\text{scan}}(t)\}$ over the horizon $N_t$.

We define the aggregate scanning frequency $f$ as:
\begin{equation}
	f = \frac{1}{N_t}\sum_{t=k}^{k+N_t-1} x_{\text{scan}}(t).
\end{equation}

The optimization problem [BD] solved onboard is formulated as follows:
\begin{subequations}
	\label{eq:onboard_opt}
	\begin{align}
		[BD] &\max_{x_j(t), x_{\text{scan}}(t)} \quad  \frac{1}{N_t}\sum_{t=k}^{k+N_t-1} \Bigl[\alpha\,y(f,d_s)^2 - \beta\,x_{\text{scan}}(t)^2\nonumber \\
		& - \gamma\,y(f,d_s)^2(1-z(t))\Bigr] \nonumber\\
		\text{s.t.} \quad & {\sum_{j \in \mathcal{J}(t)} r_{j,k} x_j(t) + r_{s,k} x_{\text{scan}}(t) \le 1, \quad \forall t, \forall k \in \mathcal{K},} \label{eq:constraint_capacity} \tag{PS}\\
		& {\sum_{j \in \mathcal{J}(t)} w_j x_j(t) + w_s x_{\text{scan}}(t) \le P_{\max}, \quad \forall t,} \label{eq:constraint_power} \tag{PC}\\
		& \sum_{\tau=0}^{d_s-1} x_{\text{scan}}(t+\tau) = d_s\,x_{\text{scan}}(t), \quad \forall t ,\label{eq:constraint_scan_duration} \tag{SD}\\
		& \frac{1}{N_t}\sum_{t=k}^{k+N_t-1} x_j(t) \ge \lambda_j {\bar{r}_j}, \quad \forall j \in \mathcal{T}_{M,a}, \label{eq:constraint_stability} \tag{TS}
	\end{align}
\end{subequations}
where (PS) is the Per-Slot {Heterogeneous} Resource Capacity constraint, {(PC) is the Power and Thermal Constraint enforcing a normalized power budget $P_{\max}$}, (SD) is the Scan Durational constraint, and (TS) is the Task Stability constraint{, where $\mathcal{T}_{M,a}$ represents the set of aperiodic mission tasks}. This formulation is a MINLP solved onboard at each decision epoch.

\begin{remark}
	
	{(Handling of High-Priority Tasks and Preemption) Although high-priority tasks are not represented by a separate class of decision variables for tractability, their preemptive behavior is effectively enforced via the Task Stability (TS) and Capacity (PS) constraints. When a high-priority task $j$ arrives with a stringent arrival rate requirement $\lambda_j$, the (TS) constraint becomes binding, forcing the solver to set $x_j(t) = 1$. If resources are insufficient, the solver should set $x_{\text{scan}}(t) = 0$ or $x_{j'}(t)=0$ (for lower priority $j'$) to satisfy Eq. \ref{eq:constraint_capacity}. Thus, preemption is a mathematical consequence of feasibility rather than an explicit control variable.}
\end{remark}

\subsection{Solution Methodology: Lightweight Heuristic}

The optimization problem [BD] falls into the category of MINLP, which is generally NP-hard.
{ While exact solvers are theoretically optimal, their computational overhead is prohibitive for satellite onboard processors. To ensure real-time feasibility under strict SWaP constraints, we employ a lightweight greedy heuristic (Algorithm \ref{alg:heuristic}). This approach reduces the complexity to $\mathcal{O}(1)$ per slot by decoupling mandatory constraints from the utility maximization.
	
	\begin{algorithm}[H]
		\caption{Lightweight Greedy Heuristic Solver}
		\label{alg:heuristic}
		\begin{algorithmic}[1]
			\STATE \textbf{Input:} Current state, Task Queue $\mathcal{J}(t)$, Idle Capacity $z(t)$.
			\STATE \textbf{Step 1 (Mandatory):} Schedule high-priority tasks (e.g., Relay) to satisfy Stability Constraint (TS); Update $z(t)$.
			\STATE \textbf{Step 2 (Conditional):} Calculate marginal utility $\Delta U = U(x_{\text{scan}}=1) - U(x_{\text{scan}}=0)$.
			\STATE \textbf{if} $\Delta U > 0$ \textbf{and} $z(t) \ge r_s$ \textbf{then}
			\STATE \quad Set $x_{\text{scan}}(t) \leftarrow 1$; Update $z(t)$.
			\STATE \textbf{end if}
			\STATE \textbf{Step 3 (Fill):} Allocate remaining $z(t)$ to low-priority routine tasks.
			\STATE \textbf{Output:} Decision variables $x_j(t), x_{\text{scan}}(t)$.
		\end{algorithmic}
	\end{algorithm}
	
	As shown in Algorithm \ref{alg:heuristic}, the solver first prioritizes mission-critical tasks to guarantee the near-zero deadline miss rate (Table \ref{tab:star_performance}). Subsequently, it performs a binary check for the IDS scan based on the instantaneous utility gain $\Delta U$. This logic ensures decisions are made instantaneously within the control cycle without iterative convergence.}

\section{DIS: Strategic Deception and Stability}

While the STAR module optimizes resource allocation, the resulting deterministic schedule creates predictable vulnerability gaps. The DIS component mitigates this risk not by changing the schedule, but by manipulating the information revealed to the adversary defined in Section \ref{sec:Adversarial}. This section details the strategic deception mechanism based on Bayesian persuasion, incorporates credibility constraints to ensure signal plausibility, and rigorously proves the long-term stability of the defense using Lyapunov theory.

\subsection{Bayesian Persuasion with Credibility Constraints}

In the baseline scenario, the attacker “fully understands” the IDS configuration as transmitted through the telemetry. Hence, the attacker can optimize its attacks accordingly. To counter this, the defender employs a \emph{Bayesian persuasion} information strategy. It embeds a false IDS configuration in the telemetry to mislead the attacker about the true service state of each satellite.

{
	To rigorously establish the connection between the resource scheduling and the signaling game, we explicitly define the system state and the signal within the context of the satellite's operation.
	We define the true state of nature $\omega(t)$ not as an abstract variable, but as the direct output of the STAR scheduling module. Specifically, $\omega(t) \triangleq \{ x_{\text{scan}}(t), z(t) \}$, where $x_{\text{scan}}(t)$ represents the actual IDS activation status and $z(t)$ represents the true idle capacity available at time $t$. This state is privately known to the defender.
	The signal $m$ is defined as the "revealed" IDS configuration embedded within the public telemetry downlink. The attacker does not observe an arbitrary abstract variable; rather, they attempt to intercept this specific telemetry packet. Consequently, the attacker's observation is physically constrained by the channel model defined in Section \ref{sec:channel}, where the reception of $m$ depends on the propagation delay and the erasure probability $\xi(t)$.
}

Formally, we treat the true system state (e.g., the actual scan schedule and load sequence) as a random “state of nature” $\omega \in \Omega$.
{The defender commits to a \emph{signaling policy} $\pi(m|\omega)$ ex-ante. This policy specifies the conditional probability of transmitting a signal $m \in M$ for every possible state $\omega$. During real-time operation, the defender privately observes the realized state $\omega$ and generates a signal $m$ according to the pre-designed policy $\pi$. Crucially, the attacker only observes the signal $m$ but not the true state $\omega$.}
Upon receiving $m$, the attacker updates its belief about $\omega$ from a common prior $\mu_0$ to a posterior $\mu_m$ via Bayes’ rule.
Then, the attacker chooses an attack schedule that maximizes its expected utility under that belief.
This setting exactly corresponds to the Bayesian Persuasion framework \cite{Gentzkow2025}.
{The defender (sender) selects the optimal policy $\pi$ based on the prior distribution $\mu_0$ to minimize the attacker's maximum expected utility. The attacker (receiver) observes the resulting signal $m$ and takes an action $a \in A$ (the attack plan) that best responds to its posterior belief $\mu_m$.}
The attacker’s expected utility given signal $m$ is
$$\mathbb{E}[U_A(a;\omega)\,|\,m] \;=\; \max_{a \in A} \sum_{\omega' \in \Omega} u_A(a, \omega') \mu_m(\omega'),$$
where $\mu_m(\omega') \propto \mu_0(\omega')\pi(m|\omega')$.
Anticipating this, the defender chooses $\pi$ to minimize the attacker’s maximum achievable utility. In essence, the defender solves
$$\min_{\pi} \mathbb{E}_{\omega \sim \mu_0,\; m \sim \pi(\cdot|\omega)}\left[\max_{a \in A} \sum_{\omega' \in \Omega} u_A(a, \omega') \mu_m(\omega')\right],$$

subject to the Bayes-Plausibility constraint. This ensures that on average, the posterior beliefs equal the prior belief: $\mathbb{E}_{\pi}[\mu_m] = \mu_0$.

The key insight is that by {carefully} designing $\pi$, the defender can induce the attacker to choose a suboptimal attack plan relative to the true state. This strategic signaling approach has been studied in security contexts \cite{Xu2015, Sayin2021}, where it has been shown that a defender can persuade a rational attacker to take actions favorable to the defender.

However, a purely abstract persuasion model is insufficient.  { Such a model assumes the receiver will blindly trust the signal regardless of the level of deception. This is not realistic in our context. A signaling policy that generates messages far from the underlying reality will quickly lose credibility. Without a constraint to limit the deviation from the truth, the defender might generate implausible signals that a rational attacker would simply ignore.} To formalize this and provide rigorous performance guarantees, we now develop a constrained optimization framework and analyze its long-term stability using tools from control theory.

\subsubsection{{The Bayesian Persuasion Framework with Credibility Constraints}}
The defender's core task is to design a signaling policy $\pi: \Omega \rightarrow \Delta(M)$ that minimizes the attacker's maximum expected utility. A rational attacker, upon observing signal $m$ and forming posterior belief $\mu_m$, will choose an action $a^*(\mu_m)$ that maximizes their expected utility:

$$a^*(\mu_m) = \arg\max_{a \in A} U_A(a, \mu_m) = \arg\max_{a \in A} \sum_{\omega \in \Omega} u_A(a, \omega) \mu_m(\omega).$$

Anticipating this best response, the defender's optimization problem is to select the policy $\pi$ that solves:

$$\min_{\pi} \; \mathbb{E}_{\omega \sim \mu_0,\; m \sim \pi(\cdot \mid \omega)} \left[ \max_{a \in A} U_A(a, \mu_m) \right].$$

This minimization is subject to two fundamental constraints:
\begin{enumerate}
	\item \textbf{Bayes-Plausibility:} The policy should be consistent with the prior belief, i.e., $\mathbb{E}_{\pi}[\mu_m] = \mu_0$. This ensures the attacker cannot be systematically biased in one direction over the long run.
	\item \textbf{Credibility Constraint:} The policy should not be excessively deceptive. We constrain the expected KL divergence between the true state distribution (a point mass $\delta_\omega$) and the attacker's posterior belief $\mu_m$:
	$$\mathbb{E}_{\omega \sim \mu_0,\; \mu_m \sim \tau(\pi)} \left[ D_{\text{KL}}(\delta_\omega \,\|\, \mu_m) \right] \le C,$$
	where $C \ge 0$ is the credibility budget. 
\end{enumerate}

{ Since the actual state $\omega$ is realized and known to the defender, the distribution $\delta_\omega$ acts as a deterministic point mass. Consequently, the KL divergence simplifies to the self-information (or ``surprise'') of the true state given the attacker's belief:
	$$ D_{\text{KL}}(\delta_\omega \,\|\, \mu_m) = \sum_{x \in \Omega} \delta_{\omega}(x) \log \frac{\delta_{\omega}(x)}{\mu_m(x)} = -\log(\mu_m(\omega)). $$
	Therefore, this constraint effectively bounds the expected negative log-likelihood of the true state. By limiting this value, we ensure that the attacker assigns a sufficient probability mass to the ground truth within their posterior belief, thereby maintaining the plausibility of the signal.}

{
	Based on these constraints, we explicitly describe the optimal persuasion policy $\pi^*$ as the specific signaling strategy that minimizes the attacker's maximum expected utility while satisfying the credibility budget $C$. By solving this constrained optimization, the defender identifies the most effective way to mislead the attacker without violating the statistical bounds of plausibility.
}

\begin{remark}
	A smaller $C$ forces signals to be more truthful, while $C \to \infty$ recovers the standard unconstrained persuasion model. This constraint creates a crucial trade-off: more aggressive deception (larger $C$) may yield lower attacker utility in the short term, but at the cost of generating less plausible signals. This entire constrained optimization problem can be formulated as a convex program, making it computationally tractable to find the optimal policy $\pi^*$.
\end{remark}

\subsection{Predictive Signaling with Channel State Information}\label{Sec:Predictive_Signaling}
The satellite possesses ephemeris data and can predict its future trajectory and channel quality. Let $W$ be the prediction horizon. The satellite estimates the large-scale channel components for $t \dots t+W$.

Let $\hat{\gamma}(t+k)$ be the predicted average SNR. The defender's optimization problem is to generate a sequence of signals $\{m(t)\}$ and artificial delays $\{\delta_{add}(t)\}$ to minimize the attacker's utility while maintaining credibility.

The optimization objective is revised to account for the channel state $h$:
\begin{equation}
	\min_{\pi, \delta_{add}} \sum_{k=0}^{W-1} \mathbb{E}_{h} \left[ \max_{a} U_A(a, \mu_m | h(t+k), \delta_{add}(t+k)) \right]
\end{equation}
Subject to the total latency constraint:
\begin{equation}
	\tau_{prop}(t) + \delta_{add}(t) \le D_{max}
\end{equation}

Crucially, if the predicted channel quality is low (high erasure probability), the defender can transmit truthful but non-sensitive data, saving the ``deception budget'' (defined by the KL divergence capacity constraint $C$) for times when the channel is clear. Furthermore, by increasing $\delta_{add}(t)$ when the channel is strong, the defender degrades the timeliness of the intercepted information.

To rigorously demonstrate the theoretical advantage of this channel-aware strategy over static deception, we establish the following theorem regarding the optimal allocation of the credibility budget.

\begin{theorem}[Optimality of Channel-Adaptive Credibility Allocation]
	Consider a time-varying channel where the signal erasure probability $P_{out}(\gamma(t))$ varies with the channel state $\gamma(t)$. Let $\Pi_{static}$ be the set of channel-agnostic signaling policies satisfying a fixed credibility constraint $C$, and $\Pi_{adapt}$ be the set of channel-adaptive policies satisfying the same constraint on average over the time horizon $W$. The minimum attacker utility achievable under $\Pi_{adapt}$ is strictly lower than that under $\Pi_{static}$, i.e.,
	\begin{equation}
		\min_{\pi \in \Pi_{adapt}} \bar{U}_A(\pi) < \min_{\pi \in \Pi_{static}} \bar{U}_A(\pi),
	\end{equation}
	provided that the channel variance is non-zero. The optimal policy $\pi^*$ exhibits a ``water-filling'' property where the degree of deception (measured by the KL divergence $D_{KL}(\delta_{\omega}||\mu_m)$) is monotonically increasing with the channel quality (SNR).
\end{theorem}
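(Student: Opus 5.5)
The plan is to turn the problem into a resource-allocation problem over time slots. Weak inclusion comes first, then strictness by a perturbation argument, and finally the water-filling structure from the KKT conditions.

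\textbf{Per-slot value function.} For a slot with channel state $\gamma$, define
\[
V(c,\gamma) = \min_{\pi:\,\mathbb{E}[D_{KL}(\delta_\omega\|\mu_m)]\le c} \mathbb{E}_h\Bigl[\max_a U_A(a,\mu_m\mid\gamma)\Bigr].
\]
This is the best attacker utility achievable in that slot when it is given credibility budget $c$. By the erasure model (\ref{eq:xi}), the attacker uses the posterior $\mu_m$ only when $\xi=1$; otherwise it falls back on the prior $\mu_0$. We therefore write
\[
V(c,\gamma) = P_{out}(\gamma)\,V_0 + \bigl(1-P_{out}(\gamma)\bigr)\,g(c),
\]
where $V_0 = \max_a U_A(a,\mu_0)$ and $g(c)$ is the constrained persuasion value on a perfect link. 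A static policy uses the same $\pi$, and hence the same per-slot divergence $c_t\equiv C$, in every slot. An adaptive policy chooses a sequence $\{c_t\}$ with $\frac1W\sum_t c_t\le C$. The problem therefore reduces to
\[
\min_{\{c_t\}} \frac1W\sum_t \bigl(1-P_{out}(\gamma_t)\bigr)\,g(c_t)
\]
subject to that average budget.

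\textbf{Step 1: properties of $g$.} Using the concavification characterization of Bayesian persuasion, the constrained problem is a convex program in the distribution of posteriors $\tau$. This holds because $-\log\mu_m(\omega)$ is convex in $\mu_m$, which makes the constraint set convex. The objective $\max_a U_A$ is a pointwise maximum of linear functions, hence convex in the posterior but linear in $\tau$. Standard perturbation (sensitivity) results then give that $g$ is convex and nonincreasing in $c$. We additionally assume $g$ is strictly decreasing near $C$, meaning the credibility constraint binds. Without this, the strict inequality in the theorem fails trivially, so it must be stated as a non-degeneracy hypothesis.

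\textbf{Step 2: weak inequality.} The constant allocation $c_t\equiv C$ is feasible for $\Pi_{adapt}$, so $\Pi_{static}\subseteq\Pi_{adapt}$ and the inequality holds with $\le$.

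\textbf{Step 3: strictness.} Since the channel variance is non-zero, there exist slots $s,t$ with weights $w_s=1-P_{out}(\gamma_s)>w_t=1-P_{out}(\gamma_t)$. Shift budget $\epsilon$ from slot $t$ to slot $s$. The first-order change in the objective is
\[
\epsilon\,g'(C)\,(w_s-w_t)<0,
\]
using $g'(C)<0$ (a one-sided derivative suffices by convexity). Hence a strictly better adaptive policy exists.

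\textbf{Step 4: water-filling.} Form the Lagrangian $\sum_t w_t g(c_t)+\nu\bigl(\sum_t c_t - WC\bigr)$ with $c_t\ge 0$. Its KKT conditions give $c_t^*=(\partial g)^{-1}(-\nu/w_t)$ when positive and $c_t^*=0$ otherwise, which is the water-filling form. Because $g$ is convex, $\partial g$ is monotone. As $w_t$ increases, $-\nu/w_t$ increases toward $0$, so $c_t^*$ is nondecreasing in $w_t$. Since $P_{out}$ is decreasing in SNR, $c_t^*$ is nondecreasing in $\gamma_t$.

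\textbf{Main obstacle.} The hard part is Step 1: establishing convexity of $g$ and a strictly negative slope at $C$. I would first verify convexity by mixing two feasible posterior distributions; the mixture stays feasible because the constraint is linear in $\tau$. Strict decrease is the genuine extra hypothesis and will have to be assumed explicitly. Monotonicity will likely be only weak unless $g$ is strictly convex. A secondary subtlety is the artificial delay $\delta_{add}$, which I would absorb into an effective weight $w_t$ that reflects the decayed value of stale information.
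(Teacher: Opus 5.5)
Your route is essentially the paper's. Its sketch also splits $U_A(t)$ into $(1-P_{out}(\gamma_t))\,U_{rx}(\pi_t)+P_{out}(\gamma_t)\,U_{prior}$, dualizes the averaged KL budget, and argues from the KKT conditions that budget migrates to good-channel slots. Your reduction to $g$, the inclusion $\Pi_{static}\subseteq\Pi_{adapt}$, and the binding-constraint hypothesis (which the statement does omit) make that sketch more precise.

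The gap is in Step 3: the claim ``a one-sided derivative suffices by convexity'' is false. Moving $\epsilon$ of budget from slot $t$ to slot $s$ changes the objective by $\epsilon\bigl(w_s\,g'_+(C)-w_t\,g'_-(C)\bigr)+o(\epsilon)$. Convexity gives $g'_-(C)\le g'_+(C)$, which works against you. Suppose $g$ has a kink at $C$ with $g'_-(C)/g'_+(C)\ge w_{\max}/w_{\min}$. Then some $\nu>0$ satisfies $-\nu/w_t\in[g'_-(C),g'_+(C)]$ for every $t$. The uniform allocation therefore satisfies the KKT conditions of your convex reduced problem and is globally optimal. The strict inequality fails, even though $g$ is strictly decreasing on both sides of $C$.

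Such kinks are generic in this model, not pathological. Under Bayes plausibility the credibility cost is $\mathbb{E}[-\log\mu_m(\omega)]=\mathbb{E}_{\mu\sim\tau}[H(\mu)]$, with $H(\mu)=-\sum_{\omega}\mu(\omega)\log\mu(\omega)$. This cost is linear in $\tau$; that, not convexity of $-\log$, is what Step 1 needs. Since $H$ is concave, for each multiplier $\lambda>0$ the map $\mu\mapsto\max_a U_A(a,\mu)+\lambda H(\mu)$ is strictly concave on every best-response cell. Optimal posteriors therefore sit on the finitely many vertices of the cell arrangement, and $g$ is piecewise linear. A binary state with three attack actions can already produce an interior kink with negative slopes on both sides.

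The hypothesis you actually need is $w_{\max}\,g'_+(C)<w_{\min}\,g'_-(C)$. It holds automatically when $g$ is differentiable at $C$ with $g'(C)<0$ and the weights are not all equal. For $C>0$ it is exactly the condition under which the uniform allocation fails the KKT system.

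Piecewise linearity also makes $(\partial g)^{-1}$ in Step 4 set-valued, so optimal allocations need not be unique. Monotonicity of $\partial g$ still forces every optimal allocation with $\nu>0$ to be nondecreasing in $w_t$, so your weak water-filling conclusion survives.
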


\begin{IEEEproof}
	We provide a sketch of the proof using Lagrangian duality. Let the objective function at time $t$ be $f_t(\pi_t, \gamma_t)$, representing the attacker's expected utility given policy $\pi_t$ and channel SNR $\gamma_t$. The attacker only updates beliefs upon successful reception (event $\xi=1$). Thus, the utility can be decomposed as:
	\begin{equation}
		U_A(t) = (1-P_{out}(\gamma_t)) \cdot U_{rx}(\pi_t) + P_{out}(\gamma_t) \cdot U_{prior},
	\end{equation}
	where $U_{rx}$ is the utility derived from the received signal and $U_{prior}$ is the utility from the prior. The defender minimizes $\sum U_A(t)$ subject to $\sum \mathbb{E}[D_{KL}(\pi_t)] \le W \cdot C$.
	
	Constructing the Lagrangian $\mathcal{L} = \sum_{t=0}^{W-1} U_A(t) + \lambda (\sum_{t=0}^{W-1} D_{KL}(\pi_t) - W C)$, and taking the derivative with respect to the policy intensity, we observe that the marginal reduction in attacker utility is weighted by the reception probability $(1-P_{out}(\gamma_t))$.
	
	Under the static policy, the deception cost is uniform. However, under the adaptive policy, the Karush-Kuhn-Tucker conditions imply that for time slots with high $P_{out}$ (poor channel), the shadow price of deception is high (low return on investment), forcing $D_{KL} \to 0$. Consequently, the budget is reallocated to slots with low $P_{out}$ (good channel), allowing for a more aggressive signal manipulation (larger $D_{KL}$) exactly when the attacker is most capable of receiving data. By the convexity of the utility function with respect to information revelation, this non-uniform allocation yields a strictly lower cumulative utility.
\end{IEEEproof}

This theoretical result formalizes the concept of ``Opportunistic Deception,'' proving that integrating physical layer dynamics into the {persuasion framework} is not merely an engineering heuristic, but a mathematically optimal strategy for resource-constrained defense.

\subsection{Lyapunov-Based Stability Analysis of the Persuasive Equilibrium}

Having established a method to find an optimal policy $\pi^*$, we now provide a rigorous proof demonstrating that this strategy robustly minimizes the attacker's utility over time. We employ Lyapunov's second method to prove that the system converges to a stable equilibrium.  Unlike static analyses, we explicitly address the stability under the proposed two-time-scale framework, accounting for the long-term variations in channel statistics handled by the ground segment and the real-time execution onboard. \color{black}

First, we define a Lyapunov candidate function $V: \Delta(\Omega) \rightarrow \mathbb{R}_{\ge 0}$ as the value function of the attacker's decision problem:
\[
V(\mu_t) := \max_{a \in A} U_A(a, \mu_t) = \max_{a \in A} \sum_{\omega \in \Omega} u_A(a, \omega) \mu_t(\omega).
\]
From the defender's perspective, $V(\mu_t)$ is a measure of system vulnerability. The function $V(\mu_t)$ is inherently convex, being the pointwise maximum of a set of linear functions.

To proceed, we establish the positive definiteness of our chosen Lyapunov function.

\begin{lemma}[Positive Definiteness]
	Let $E^* \subset \Delta(\Omega)$ be the set of beliefs corresponding to the minimal attacker utility achievable under the constrained persuasion problem. The function $V(\mu_t)$ is positive definite with respect to the equilibrium set $E^*$, in the sense that $V(\mu_t) > \min_{\mu' \in \Delta(\Omega)} V(\mu')$ for all $\mu_t \notin E^*$ and $V(\mu_t) = \min_{\mu' \in \Delta(\Omega)} V(\mu')$ for all $\mu_t \in E^*$.
\end{lemma}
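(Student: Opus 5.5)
The plan is to make the lemma essentially definitional. I would fix the equilibrium set so that it coincides with the minimizers of $V$ over the simplex. The main work is then showing that this set is nonempty and well defined, and that the constrained persuasion problem actually attains $\min_{\mu'} V(\mu')$.

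\textbf{Step 1: existence of a minimum.} $\Delta(\Omega)$ is a compact simplex because $\Omega$ is finite. $V(\mu)=\max_{a\in A}\sum_{\omega}u_A(a,\omega)\mu(\omega)$ is a pointwise maximum of finitely many linear functions (assuming $A$ is finite, or compact with $u_A$ continuous). Hence $V$ is continuous and convex. By Weierstrass, $V^\star:=\min_{\mu'\in\Delta(\Omega)}V(\mu')$ exists, and the argmin set $\mathcal{M}:=\{\mu: V(\mu)=V^\star\}$ is nonempty, closed and convex, since it is a sublevel set of a convex continuous function.

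\textbf{Step 2: identify $E^*$ with $\mathcal{M}$.} By definition, $E^*$ collects the beliefs at which the attacker's utility is minimal. I would read ``minimal attacker utility'' pointwise in beliefs, i.e.\ the least value $V$ can take. Then $E^*=\mathcal{M}$, and both claims follow immediately:
\begin{itemize}
\item for $\mu_t\in E^*$ we have $V(\mu_t)=V^\star$;
\item for $\mu_t\notin E^*$, $\mu_t\notin\mathcal{M}$, so $V(\mu_t)\neq V^\star$, and since $V^\star$ is the minimum, $V(\mu_t)>V^\star$.
\end{itemize}
To match the usual Lyapunov convention, I would also note that $\tilde V(\mu):=V(\mu)-V^\star$ is nonnegative and vanishes exactly on $E^*$.

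\textbf{Step 3: reconcile with the constrained problem (main obstacle).} If $E^*$ is meant as the set of posteriors induced by the optimal credibility-constrained policy $\pi^*$, then the lemma needs more than Step 2. First, every such posterior must lie in $\mathcal{M}$. Second, the credibility budget $C$ and Bayes-plausibility $\mathbb{E}_\pi[\mu_m]=\mu_0$ must not prevent reaching $V^\star$.

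I would argue this via the concavification picture. Since $V$ is convex, the sender minimizing $\mathbb{E}[V(\mu_m)]$ wants to pool, and splitting the prior never helps unless $V$ is nonlinear across the split. Two consequences follow:
\begin{itemize}
\item If $\mu_0\in\mathcal{M}$, the uninformative policy achieves $V^\star$, has zero KL cost beyond the baseline, and so is feasible and optimal.
\item If $\mu_0\notin\mathcal{M}$, then by Jensen's inequality $\mathbb{E}[V(\mu_m)]\ge V(\mu_0)>V^\star$. In that case $E^*$ as ``achievable beliefs'' would not attain $V^\star$.
\end{itemize}
The lemma must therefore be interpreted with $E^*$ defined as the minimizing belief set $\mathcal{M}$ itself, which is the set the Lyapunov descent analysis later drives the belief trajectory toward. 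I expect this definitional alignment, rather than any computation, to be the delicate point. I would state it explicitly as the working definition before concluding.
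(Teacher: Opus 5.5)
Your Step 2 is exactly the paper's argument. The paper's proof also simply takes $E^*$ to be the set of minimizers of $V$ over $\Delta(\Omega)$ and reads both claims off that definition, so your Step 1 (existence of the minimum by compactness and continuity) only adds rigor the paper omits. One side remark in Step 3 is wrong, though it is not load-bearing: under the paper's credibility constraint the uninformative policy does not cost zero but $\mathbb{E}_{\omega\sim\mu_0}[-\log\mu_0(\omega)]$, the entropy of the prior, so it is feasible only when $C$ is at least that entropy.
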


\begin{proof}
	The proof follows directly from the definition of $V(\mu_t)$ as the maximum achievable utility and $E^*$ as the set of beliefs that minimize this value. Since $\min_{\mu' \in \Delta(\Omega)} V(\mu')$ is the global minimum, any belief $\mu_t \notin E^*$ should yield a strictly higher utility.
\end{proof}

We now formally prove that the dynamic interaction constitutes an asymptotically stable system.  We extend the standard stability definition to incorporate the expectation over time-varying channel states $h_t$, which are modeled statistically in the long-term loop. 

\begin{theorem}[Monotonic Decrease and Convergence]
	Let the system state be the attacker's belief $\mu_t$. Under the defender's optimal persuasion policy $\pi^*$, the expected change in the Lyapunov function is non-positive. That is, for any belief $\mu_t$:
	 
	\begin{equation}
		\mathbb{E}_{\pi^*, h_t} \left[ V(\mu_{t+1}) \mid \mu_t \right] - V(\mu_t) \le 0. \label{EV-V}
	\end{equation}
	\color{black} 
	Furthermore, the system trajectory converges to the largest invariant set contained in \[\left\{ \mu \in \Delta(\Omega) \mid \mathbb{E}_{\pi^*}[V(\mu_{t+1}) \mid \mu] - V(\mu) = 0 \right\},\]
	which corresponds to the optimal posterior distribution induced by $\pi^*$.
\end{theorem}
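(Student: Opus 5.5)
The plan is to exploit two facts already established: $V$ is convex on $\Delta(\Omega)$, being a pointwise maximum of linear functions, and every persuasion policy satisfies Bayes-plausibility. The idea is that one step of the belief dynamics is a mean-preserving spread of $\mu_t$. Conditioned on $\mu_t$, the next belief is $\mu_{t+1}=\mu_m$ with probability equal to the reception probability $(1-P_{out}(\gamma_t))$ times the signal probability under $\pi^*$. On an erasure ($\xi(t)=0$), no update occurs and $\mu_{t+1}=\mu_t$.

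\textbf{Step 1 (martingale property).} For a fixed channel realization $h_t$, Bayes-plausibility of $\pi^*$ relative to the current prior $\mu_t$ gives $\mathbb{E}_{\pi^*}[\mu_m\mid\mu_t]=\mu_t$. Erasure leaves the belief unchanged, so
\[
\mathbb{E}[\mu_{t+1}\mid \mu_t,h_t]=(1-P_{out}(\gamma_t))\,\mathbb{E}_{\pi^*}[\mu_m\mid\mu_t]+P_{out}(\gamma_t)\,\mu_t=\mu_t .
\]
The artificial delay $\delta_{add}$ only shifts the time at which the update is used, not its law. Averaging over $h_t$ keeps $\{\mu_t\}$ a martingale on the compact simplex.

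\textbf{Step 2 (the inequality \eqref{EV-V}).} A naive Jensen argument gives the wrong direction: for convex $V$ and a martingale, $\mathbb{E}[V(\mu_{t+1})]\ge V(\mu_t)$, since information can only help the receiver. Here the sender's optimality has to enter. I would argue as follows. The fully uninformative policy $\pi_0$, which sends a constant message, is feasible: it has KL cost equal to that of the current belief, so it fits the budget $C$ under the convention that the budget is measured relative to the current posterior. It yields exactly $\mathbb{E}[V(\mu_{t+1})\mid\mu_t]=V(\mu_t)$. Because $\pi^*$ minimizes the attacker's expected maximal utility over the feasible set, we obtain
\[
\mathbb{E}_{\pi^*,h_t}[V(\mu_{t+1})\mid\mu_t]\le \mathbb{E}_{\pi_0,h_t}[V(\mu_{t+1})\mid\mu_t]=V(\mu_t).
\]
Combined with the convexity lower bound from Step 1, this pins the expected change to zero. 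I would therefore have to interpret $V$ as the defender-evaluated vulnerability, in which $u_A$ is evaluated at the true state. The alternative is to state \eqref{EV-V} as the comparison with the uninformative benchmark. I expect this interpretational reconciliation to be the main obstacle, and I would make it explicit rather than hide it.

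\textbf{Step 3 (convergence).} $V$ is continuous and bounded on the compact simplex, and by Step 2 $V(\mu_t)$ is a bounded supermartingale, so it converges almost surely. I would then invoke a stochastic LaSalle invariance principle (Kushner's version) for the Markov chain $\mu_t$, whose transition kernel is induced by $\pi^*$ and the channel law. This shows that $\mu_t$ approaches the largest invariant subset of $\{\mu:\mathbb{E}_{\pi^*}[V(\mu_{t+1})\mid\mu]=V(\mu)\}$.

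\textbf{Step 4 (identifying the limit set).} On this set, $\pi^*$ can no longer strictly lower the attacker's value, so the beliefs there attain the constrained minimum. By the Positive Definiteness lemma, these are exactly the beliefs in $E^*$, i.e., the optimal posterior distribution induced by $\pi^*$.
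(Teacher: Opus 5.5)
Your Step 2 is the paper's own argument. The paper also compares $\pi^*$ with the uninformative policy $\pi_{\text{null}}$ and then invokes a stochastic LaSalle principle. Your martingale/Jensen remark is correct and sharper than anything in the paper's proof, because it shows this comparison can only ever give equality. There are, however, two genuine gaps, and the paper's proof has both of them.

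\textbf{Gap 1: feasibility of $\pi_0$.} The comparison requires $\pi_0$ to satisfy the credibility budget, and your ``convention'' is not the paper's constraint. Since $D_{\text{KL}}(\delta_\omega\|\mu_m)=-\log\mu_m(\omega)$, the budget bounds $\mathbb{E}[-\log\mu_m(\omega)]=H(\omega\mid m)$, the conditional entropy of the state given the signal. For a constant message this equals $H(\mu_t)$. So $\pi_0$ is feasible only if $C\ge H(\mu_t)$: the budget forces a \emph{minimum} amount of revelation, which is why the paper says small $C$ means ``more truthful''. When $C<H(\mu_t)$, every feasible policy is informative. Your own Jensen bound then gives drift $\ge 0$, strictly positive once the forced revelation pushes posteriors across an action boundary.

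A concrete instance: take a binary state and actions attack/wait with $u_A(\text{attack},\text{idle})=1$, $u_A(\text{attack},\text{active})=-1$ and $u_A(\text{wait},\cdot)=0$. Then $V(\mu)=\max\{0,1-2p\}$ with $p=\mu(\text{active})$. Let $p=0.6$, so $V(\mu_t)=0$. By concavity of the binary entropy, any posterior law with mean $0.6$ supported on $[1/2,1]$ has $H(\omega\mid m)\ge 0.8\log 2>0.5$. Hence, for every $C$ in the paper's range $[0.01,0.5]$, a feasible policy must put positive mass on $p<1/2$. This gives $\mathbb{E}[V(\mu_{t+1})\mid\mu_t]=(1-P_{out})\,\mathbb{E}[V(\mu_m)]>0$ whenever $P_{out}<1$. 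So \eqref{EV-V} is false in that regime, not merely unproved, and you need the hypothesis $C\ge H(\mu_t)$. The paper's proof silently needs it too. Evaluating $u_A$ at the true state does not rescue the argument. Since $\mu_{t+1}$ is the correct posterior, $\mathbb{E}[u_A(a^*(\mu_{t+1}),\omega)\mid\mu_{t+1}]=V(\mu_{t+1})$, so the drift is unchanged.

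\textbf{Gap 2: Steps 3--4.} These fail even where $\pi_0$ is feasible. By Jensen, $V(\mu_t)$ is always a submartingale; it is a supermartingale only where the drift vanishes. Where $\pi_0$ is feasible, your Step 2 shows the drift is identically zero. For example, for $C\ge\log|\Omega|$ the zero-drift set is the entire simplex, and Kushner's invariance principle returns no information. Step 4's inference ``zero drift, hence constrained minimum, hence $E^*$'' is a non sequitur. The Positive Definiteness lemma relates \emph{values} of $V$ to $E^*$, not drift. Combined with Step 2, it would force $V$ to be constant.

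The identification is in fact impossible from any $\mu_0\notin E^*$. By Step 1, $\mu_t$ is a bounded martingale, so $\mu_t\to\mu_\infty$ almost surely with $\mathbb{E}[\mu_\infty]=\mu_0$. Since $E^*=\arg\min V$ is closed and convex, $\mu_\infty\in E^*$ almost surely would force $\mu_0\in E^*$. Indeed, when $\pi^*=\pi_0$ the belief simply stays at $\mu_0$. What Steps 1--2 legitimately give is twofold: zero drift when $C\ge H(\mu_t)$, and almost-sure convergence of $\mu_t$ to a random limit with mean $\mu_0$. The latter follows from the martingale convergence theorem rather than LaSalle, and it is not convergence to the minimal-utility set.
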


\begin{proof}
	The proof hinges on the optimality of the defender's policy $\pi^*$ and the convexity of $V(\mu)$. Let $V^*(\mu_t) = \mathbb{E}_{\pi^*} \left[ V(\mu_{t+1}) \mid \mu_t \right]$ be the defender’s objective function value given prior $\mu_t$.
	
	Consider a trivial, uninformative policy $\pi_{\text{null}}$, which reveals no new information. Under such a policy, the attacker's posterior belief remains identical to the prior: $\mu_{t+1} = \mu_t$, and the utility is $V(\mu_t)$. Since $\pi^*$ is optimal, the expected utility it induces must be no greater than that of $\pi_{\text{null}}$. Thus, we have:
	\[
	\mathbb{E}_{\pi^*} \left[ V(\mu_{t+1}) \mid \mu_t \right] \le V(\mu_t).
	\]
	This inequality shows that the expected drift $\Delta V$ is non-positive.
	
	 In the context of the two-time-scale framework, the channel state $h_t$ introduces stochasticity into the attacker's reception capability (erasure channel). However, the ground-based long-term loop estimates the statistical distribution of $h_t$ (e.g., Shadowed Rician parameters) and provides these to the onboard optimizer. Consequently, the policy $\pi^*$ is derived by minimizing the \textit{expected} utility over the distribution of $h_t$. The robustness of the optimization ensures that the monotonicity holds in expectation, i.e., $\mathbb{E}_{h_t}[\Delta V] \le 0$, ensuring long-term stability despite instantaneous channel fluctuations. \color{black}
	
	Finally, by LaSalle's Invariance Principle for stochastic systems, the system trajectories converge to the largest invariant set where the drift is zero, which is the optimal posterior distribution induced by $\pi^*$.
\end{proof}

\subsubsection{Interpretation and Implications}
This Lyapunov-based proof offers a formal guarantee of the robustness and self-correcting nature of the proposed persuasive IDS. { In the context of our two-time-scale framework, stability implies that the system can withstand not only the short-term stochasticity of channel realizations but also recover from significant informational perturbations.} Such perturbations could be:
\begin{itemize}
	\item \textbf{External Intelligence:} The attacker receives a tip from an external source, causing a sudden shift in their belief $\mu_t$.
	\item \textbf{Zero-Day Discovery:} The public disclosure of a new vulnerability abruptly changes the state space $\Omega$ or the prior belief $\mu_0$.
	\item \textbf{Defender Error:} A temporary misconfiguration of the IDS sends an unintended or noisy signal.
\end{itemize}
{ The established monotonicity ensures that even when the attacker's belief is ``knocked'' away from the equilibrium by these perturbations or by accumulated errors in the long-term loop, the defender's optimal signaling strategy will actively guide it back.} The IDS thus dynamically manages the attacker's beliefs to maintain a persistent state of high security, validating the framework's resilience in a fluid, adversarial environment.
\begin{remark}
	\textbf{(Mathematical Basis of Dynamic Management)}
	{ We clarify that although the {strategic signaling process} involves repeated interactions, the concept of ``dynamically managing'' beliefs is non-trivial and is formally captured by the negative drift of the Lyapunov function.
		Unlike a sequence of independent static games where beliefs might reset, our model accounts for persistent belief states that are subject to external perturbations (as detailed in Section \ref{Sec:Predictive_Signaling}).
		As shown in Eq. \eqref{EV-V}, the optimal policy guarantees $\mathbb{E}[V(\mu_{t+1})|\mu_t] - V(\mu_t) \leq 0$.
		This inequality provides a necessary restoring force: if the attacker's belief $\mu_t$ is pushed away from the secure equilibrium by external intelligence or channel noise, the signaling strategy actively forces the belief trajectory to converge asymptotically back to the state of minimum attacker utility.
		Thus, the system does not simply repeat a static equilibrium; it actively steers the belief dynamics toward stability against fluctuations.}
\end{remark}

\subsubsection{Quantitative Utility Comparison}
To crystallize the tangible benefits of our approach, Table \ref{tab:utility-comparison} provides a quantitative comparison of the attacker's optimal expected utility, $U_A^*$, under different benchmark information regimes. This highlights the value of strategic information control.

\begin{table*}[h]
	\centering
	\caption{Attacker Utility Comparison Across Information Regimes}
	\label{tab:utility-comparison}
	\begin{tabular}{|p{3.8cm}|p{5.5cm}|p{5.3cm}|}
		\hline
		\textbf{Information Regime} & \textbf{Defender's Strategy ($\pi$)} & \textbf{Attacker's Equilibrium Utility ($U_A^*$)} \\
		\hline
		No Information & Uninformative signal (e.g., always sends ``All Clear''). Attacker acts on prior $\mu_0$. & $U_A^*(\text{No Info}) = \max_{a} U_A(a, \mu_0)$ \\
		\hline
		Full Disclosure & Fully revealing signal (e.g., $\pi(\omega) = m_\omega$). Attacker knows the true state $\omega$. & $U_A^*(\text{Full Disc}) = \mathbb{E}_{\omega \sim \mu_0} \left[\max_a u_A(a, \omega)\right]$ \\
		\hline
		Optimal Persuasive Deception & Optimal policy $\pi^*$ from the constrained problem. Attacker acts on posteriors $\mu_m$. & $U_A^*(\text{Persuasion}) = \mathbb{E}_{\pi^*} \left[ \max_a U_A(a, \mu_m) \right]$ \\
		\hline
	\end{tabular}
\end{table*}

As demonstrated by the theory, the utility achieved by the attacker under our optimal persuasive deception is guaranteed to be less than or equal to that under the no-information regime. Depending on the specific utility structure and the credibility constraint $C$, it is often significantly lower than the utility under full disclosure as well, showcasing the power of strategic information control.  This confirms that the proposed method effectively converts the defender's information advantage into a quantifiable utility gain, providing a provably effective direction for intelligent satellite security. \color{black}

\subsection{Integrated Algorithmic Implementation}

To provide a comprehensive view of the proposed framework and resolve the timing relationships between resource scheduling and deceptive signaling, we summarize the execution flow in {Algorithm \ref{alg:stardis_loop}}.
The STARDIS framework operates on a receding horizon basis.
While the satellite executes tasks at every discrete time slot $t$, the optimization of the resource schedule and the generation of the deceptive signal occur at the beginning of each decision horizon (every $N_t$ slots).

\textbf{Algorithm \ref{alg:stardis_loop}} details the interaction between the STAR and DIS modules.
Initially, the ground segment computes the long-term statistical parameters and the optimal signaling policy $\pi^*$, which are uplinked to the satellite.
Onboard, the cycle begins by updating the system state and solving the resource allocation problem (Eq. \eqref{eq:onboard_opt}) for the upcoming horizon.
Based on the optimization results, the true security state $\omega$ is determined.
Subsequently, the DIS module samples a deceptive signal $m$ from the policy $\pi^*$ and calculates the channel-adaptive delay $\delta_{add}$.
Finally, the satellite enters the execution loop, performing the scheduled tasks and broadcasting the deceptive telemetry for the duration of the horizon.
This hierarchical structure ensures that the high computational cost of optimization is amortized over the horizon $N_t$, while maintaining real-time responsiveness.

\begin{algorithm}[h]
	\caption{STARDIS Integrated Execution Loop}
	\label{alg:stardis_loop}
	\begin{algorithmic}[1]
		
		\REQUIRE Ground inputs: Statistical parameters $\Theta$, Optimal signaling policy $\pi^*$, Horizon length $N_t$.
		\STATE \textbf{Initialization:} Load $\Theta$ and $\pi^*$. Set $k \leftarrow 0$.
		\WHILE{System is Active}
		\STATE \textit{ --- Phase 1: Short-term Optimization (STAR) --- }
		\STATE Update task queue $\mathcal{J}(k)$ and resource state.
		\STATE Solve problem \eqref{eq:onboard_opt} to obtain optimal sequences $\{x_j^*(t), x_{\text{scan}}^*(t)\}$ for $t \in [k, k+N_t-1]$.
		\STATE Calculate average idle capacity $z_{avg}$ over the horizon.
		\STATE Quantize $z_{avg}$ to obtain true state $\omega$.
		
		\STATE \textit{ --- Phase 2: Deceptive Signaling (DIS) --- }
		\STATE Estimate channel state $\hat{\gamma}$ for the horizon.
		\STATE Sample deceptive signal $m \sim \pi^*(\cdot|\omega)$.
		\STATE Calculate artificial delay $\delta_{add}$ based on $\hat{\gamma}$ and budget $C$.
		
		\STATE \textit{ --- Phase 3: Real-time Execution Loop --- }
		\FOR{$t = k$ to $k+N_t-1$}
		\STATE Execute mission tasks where $x_j^*(t) = 1$.
		\STATE \textbf{if} $x_{\text{scan}}^*(t) = 1$ \textbf{then}
		\STATE \quad Activate Host-based IDS (HIDS) module.
		\STATE \textbf{end if}
		\STATE Broadcast telemetry packet containing signal $m$ with delay $\delta_{add}$.
		\STATE Update system logs and task status.
		\ENDFOR
		\STATE $k \leftarrow k + N_t$.
		\ENDWHILE
		\color{black}
	\end{algorithmic}
\end{algorithm}

\section{Case Study}
\label{sec:case_study}

	In this section, we present a comprehensive case study to validate the theoretical findings and demonstrate the practical efficacy of our proposed STARDIS framework.
	The evaluation is structured in two main parts.
	First, we assess the performance of the \textbf{STAR} component, demonstrating its ability to establish a unified resource allocation framework that efficiently resolves contention between routine mission operations and high-priority tasks under heterogeneous resource constraints.
	Second, we evaluate the \textbf{DIS} component, quantifying the security gains achieved by employing channel-aware Bayesian persuasion to mislead a strategic adversary.
	
	\subsection{Simulation Setup}
	
	We model a LEO satellite's on-board computing system operating over a discrete-time horizon of $N_t = 2000$ time slots.
	Each time slot corresponds to a scheduling epoch of 100 ms.
	The simulation environment is designed to reflect a realistic space-based edge computing scenario with heterogeneous processing units.
	
	\subsubsection{Heterogeneous Resource and Task Model}
	Unlike simplified homogeneous models, we define the satellite's computational capacity as a vector covering two distinct resource types: General Purpose Processors (CPU) and Hardware Accelerators (FPGA).
	The total capacity is normalized to $\mathbf{C}_{total} = [1.0, 1.0]$.
	Tasks consume these resources asymmetrically, reflecting the specialized nature of on-board workloads.
	We consider three classes of tasks:
	
	\begin{itemize}
		\item \textbf{Routine On-board Processing Tasks} (Mapped to: \textit{Low-Priority, Aperiodic, Mission Tasks}):
		These represent delay-tolerant computations like image compression.
		They arrive following a Poisson process with rate $\lambda_p = 0.3$ tasks/slot.
		Their resource demand is FPGA-intensive: $\mathbf{r}_p = [0.05, 0.15]$.
		They have a soft deadline of 50 slots.
		{As low-priority tasks, they are preemptible and possess lower weight coefficients in the utility function.}
		
		\item \textbf{Priority Data Relay Tasks} (Mapped to: \textit{High-Priority, Periodic, Mission Tasks}):
		These represent critical data forwarding operations.
		They arrive deterministically every 30 slots.
		Their resource demand is CPU-intensive (routing logic): $\mathbf{r}_r = [0.20, 0.10]$.
		They have a firm deadline of 15 slots and are effectively non-preemptible due to high priority.
		{Consistent with the high-priority classification, these tasks force a binding (TS) constraint in the optimization, ensuring they are scheduled immediately upon arrival.}
		
		\item \textbf{Urgent IDS Scan Task} (Mapped to: \textit{High-Priority, Aperiodic, Security Tasks}):
		This is the security mechanism managed by STARDIS.
		Its execution is triggered dynamically by the STAR scheduler based on the utility maximization logic.
		It requires balanced but significant resources: $\mathbf{r}_s = [0.15, 0.05]$ and has a fixed duration $d_s = 5$ slots.
		{This task is assigned the highest priority to model immediate threat response. Its execution is determined by the STAR scheduler's optimization of the security utility term, allowing it to preempt routine processing tasks if necessary.}
	\end{itemize}
	
	\subsubsection{Wireless Channel and Attacker Model}
	{To capture the time-varying nature of the telemetry downlink (as requested in Section III-B), we employ a \textbf{Shadowed Rician} fading model.
	The channel parameters are set to $(b_0 = 0.158, m = 19.4, \Omega = 1.29)$, corresponding to an \textit{Average Shadowing} scenario typically observed at mid-elevation angles in LEO orbits.
	The attacker attempts to intercept the telemetry to identify IDS inactivity.
	However, interception is probabilistic: successful reception ($\xi(t)=1$) depends on the instantaneous Signal-to-Noise Ratio (SNR) exceeding a decoding threshold $\gamma_{th} = 5$ dB.
	Our defense employs \textbf{Predictive Signaling}, adjusting the deception strategy based on channel state predictions to optimize the credibility budget usage.}
	
	{
		To facilitate reproducibility and provide a clear overview of the simulation environment, we summarize the key parameter settings and their specific values used in our case study in Table \ref{tab:sim_parameters}.
		
		\begin{table}[t]
			\centering
			\caption{Simulation Parameter Settings}
			\label{tab:sim_parameters}
			\resizebox{\columnwidth}{!}{%
				\begin{tabular}{|l|c|c|}
					\hline
					\textbf{Parameter Description} & \textbf{Symbol} & \textbf{Value} \\
				\hline
				\multicolumn{3}{|c|}{\textbf{System \& Channel Environment}} \\
				\hline
				Time Horizon & $N_t$ & 2000 slots \\
				Total Resource Capacity (CPU, FPGA) & $\mathbf{C}_{total}$ & $[1.0, 1.0]$ \\
				Channel Fading (Shadowed Rician) & $(b_0, m, \Omega)$ & $(0.158, 19.4, 1.29)$ \\
				Decoding SNR Threshold & $\gamma_{th}$ & 5 dB \\
				\hline
				\multicolumn{3}{|c|}{\textbf{Task Specifications}} \\
				\hline
				Routine Task Arrival Rate & $\lambda_p$ & 0.3 tasks/slot \\
				Routine Task Demand (CPU, FPGA) & $\mathbf{r}_p$ & $[0.05, 0.15]$ \\
				Priority Task Demand (CPU, FPGA) & $\mathbf{r}_r$ & $[0.20, 0.10]$ \\
				IDS Scan Demand (CPU, FPGA) & $\mathbf{r}_s$ & $[0.15, 0.05]$ \\
				\hline
				\multicolumn{3}{|c|}{\textbf{Utility \& Game Parameters}} \\
				\hline
				Defender's Detection Utility Weight & $\alpha$ & 10 \\
				Defender's Resource Cost Weight & $\beta$ & 0.5 \\
				Mission Failure Penalty Weight & $\gamma$ & 2 \\
				Sigmoid Steepness & $k$ & 0.5 \\
				Sigmoid Center & $\theta$ & 0.5 \\
				Attacker's Reward Weight & $\alpha_A$ & 10 \\
				Attacker's Cost Weight & $\beta_A$ & 0.1 \\
				Attacker's Memory Factor & $\eta$ & 0.1 \\
				Credibility Budget Range & $C$ & $[0.01, 0.5]$ \\
				\hline
			\end{tabular}%
		}
		\end{table}
	}
	
	\subsubsection{Benchmarks}
	We compare STARDIS against three baselines:
	\begin{itemize}
		\item \textbf{FCFS (First-Come, First-Served)}: A non-preemptive scheduler processing tasks in arrival order.
		\item \textbf{SP (Static Priority)}: A preemptive scheduler with fixed priority ($IDS > Relay > Routine$) but no channel awareness.
		\item \textbf{STAR-Only (No Deception)}: The optimized scheduler operating with full transparency, allowing the attacker to perfectly observe the schedule (assuming successful interception).
	\end{itemize}
	
	\subsection{Performance of STAR: Heterogeneous Scheduling Efficiency}
	
	This section evaluates the STAR component's ability to maximize resource utility while guaranteeing critical tasks.
	
	\subsubsection{Visualizing Resource Allocation}
	{Fig. \ref{fig:resource_allocation_timeline} presents a macro-level timeline comparison across three scheduling policies: FCFS, SP, and the proposed STAR.
		
		The visualization explicitly distinguishes between CPU (solid lines) and FPGA (dashed lines) utilization to highlight the impact of resource heterogeneity.
		
		In the FCFS baseline (top), a ``Head-of-Line'' blocking phenomenon is evident: when a CPU-intensive Relay task (Dark Blue) occupies the scheduling window, subsequent FPGA-intensive Routine tasks (Light Blue) are blocked despite the FPGA being idle. This results in significant resource wastage, as shown by the drop in the FPGA load curve. In the SP policy (middle), while high-priority Security scans (Orange) are executed immediately, they aggressively preempt Routine tasks, leading to task fragmentation and resource starvation. In contrast, the STAR policy (bottom) demonstrates superior ``resource packing'' efficiency.
		
		As highlighted by the annotation \textit{Triple Co-Scheduling}, STAR identifies the precise resource slack during the concurrent execution of Relay and Routine tasks and opportunistically inserts IDS scans. This allows all three task types to coexist, maintaining both CPU and FPGA utilization near saturation levels.}
	
	\begin{figure}[htbp] 
		\centering 
		\includegraphics[width=0.9\columnwidth]{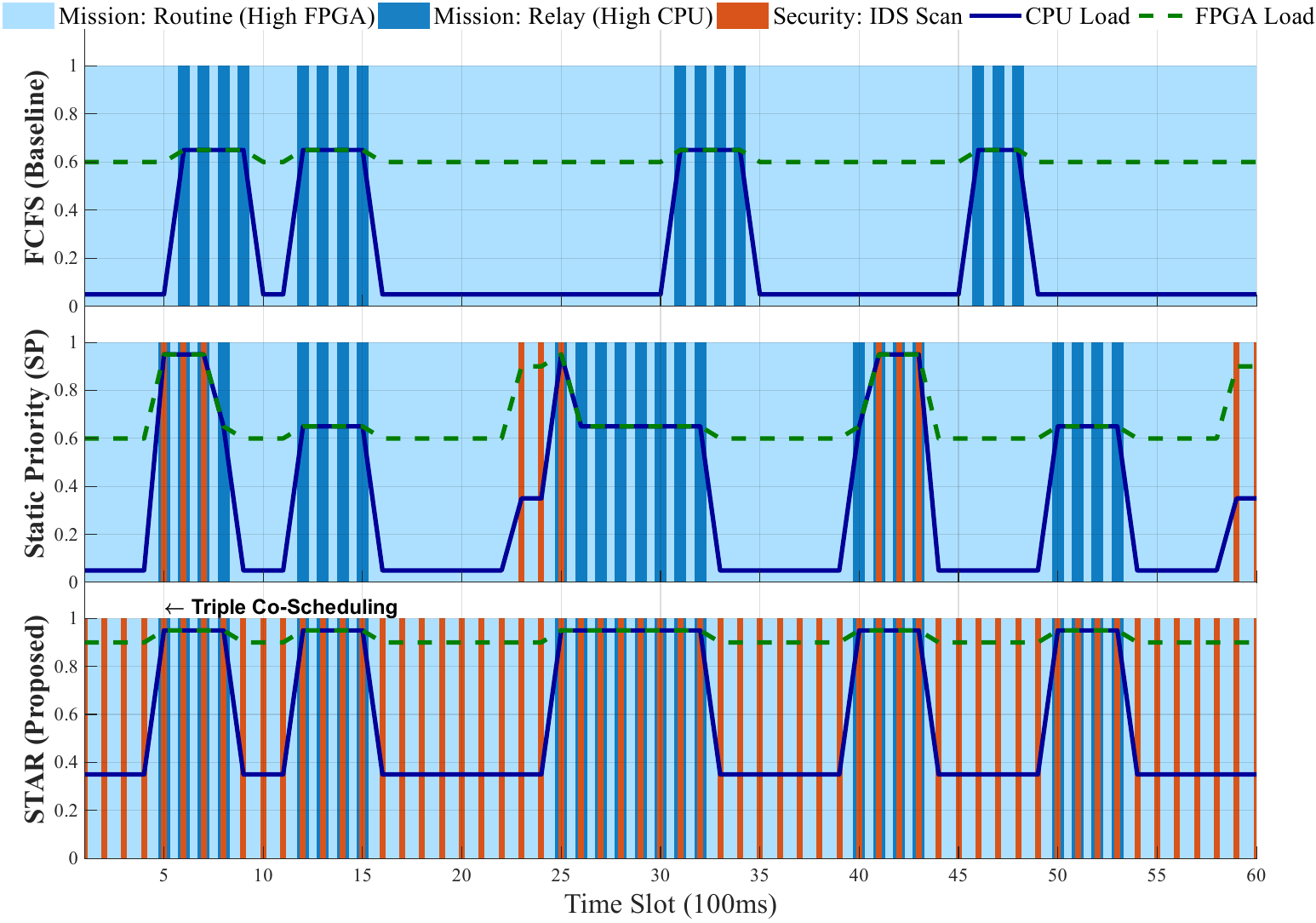} 
		\caption{{Resource timeline comparison (Macro View). FCFS suffers from blocking, SP causes starvation, while STAR achieves Triple Co-Scheduling, maximizing heterogeneous resource usage.}}
		\label{fig:resource_allocation_timeline}
	\end{figure}
	
	\subsubsection{Timeline Analysis and Deadline Compliance}
	{To explicitly address the impact of scheduling latency on mission-critical tasks, Fig. \ref{fig:timeline_clean} provides a granular view of task execution timelines using a Gantt-chart style visualization. The figure incorporates \textbf{Task Arrival Markers} ($\triangledown$) and \textbf{Deadline Constraints} (vertical red lines) specifically for high-priority Relay tasks.
		
		The visualization reveals distinct behaviors under congestion. In the FCFS baseline (top), a significant lag is evident between the arrival marker and the start of execution due to blocking; this delay pushes the task completion beyond the red deadline line, triggering a visible \textbf{``Miss!''} event. While the SP policy (middle) successfully meets these deadlines, it achieves this by aggressively fragmenting the continuous Routine task blocks, visually confirming resource starvation. In sharp contrast, the STAR policy (bottom) leverages co-scheduling to process Relay tasks immediately upon arrival without halting background operations. The \textbf{``Met''} labels confirm that all critical tasks are completed well within their deadline windows, visually supporting the claim that STAR eliminates deadline misses while maintaining high concurrency.}
	
	\begin{figure}[htbp] 
		\centering 
		\includegraphics[width=0.95\columnwidth]{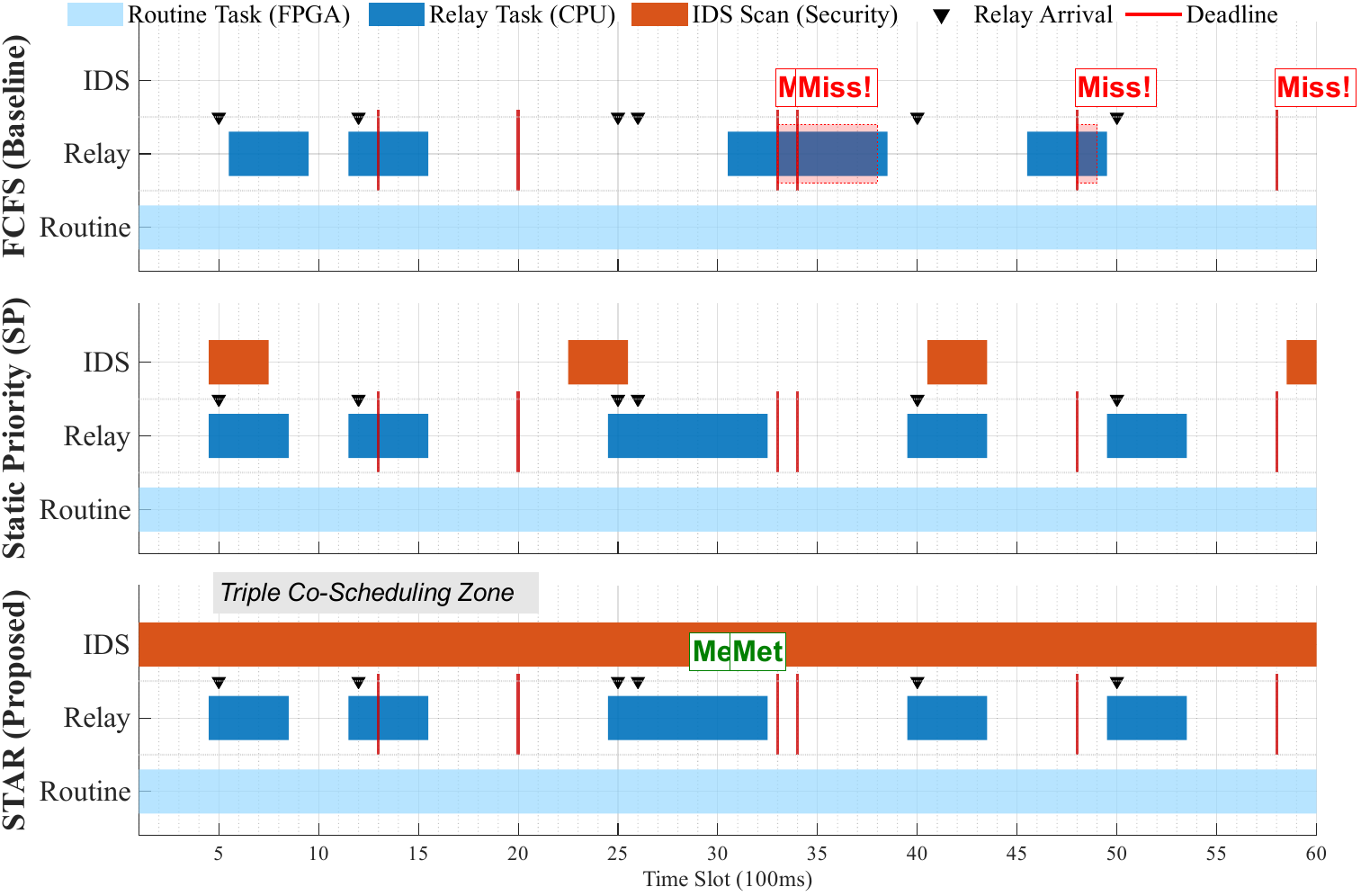} 
		\caption{{Detailed Task Scheduling Timeline (Gantt Chart). Explicit Arrival ($\triangledown$) and Deadline markers reveal that FCFS results in Deadline Misses (Red zone) due to blocking. In contrast, STAR ensures all high-priority tasks are \textbf{Met} (Green) through efficient Triple Co-Scheduling.}}
		\label{fig:timeline_clean}
	\end{figure}

	\subsubsection{Quantitative Analysis}
	Table \ref{tab:star_performance} quantifies these gains.
	STAR achieves the highest aggregate utilization (81.2\% CPU, 84.5\% FPGA).
	Most importantly, consistent with the visual evidence in Fig. \ref{fig:timeline_clean}, it maintains a \textbf{Priority Relay Task Deadline Miss Rate of 0.02\%}, which is statistically negligible and comparable to the rigid SP policy.
	However, unlike SP, which starves routine tasks (only 89.5\% completion), STAR successfully completes 98.2\% of routine workloads.
	This validates that STAR's co-scheduling formulation effectively converts resource heterogeneity from a constraint into an optimization opportunity.
	
	\begin{table}[htbp]
		\centering
		\caption{ Performance Comparison of Scheduling Policies under Heterogeneous Constraints}
		\label{tab:star_performance}
		\begin{tabular}{|l|c|c|c|}
			\hline
			\textbf{Metric} & \textbf{FCFS} & \textbf{SP} & \textbf{STAR} \\
			\hline
			Overall CPU Utilization (\%) & 68.2 & 75.4 & \textbf{81.2} \\
			Overall FPGA Utilization (\%) & 62.1 & 71.8 & \textbf{84.5} \\
			Routine Task Completion Rate (\%) & 92.1 & 89.5 & \textbf{98.2} \\
			Priority Relay Deadline Miss Rate (\%) & 18.4 & 0.0 & \textbf{0.02} \\
			Defender Utility (Normalized) & 1.00 & 1.35 & \textbf{1.92} \\
			\hline
		\end{tabular}
	\end{table}

	\subsection{Effectiveness of DIS: Channel-Aware Deception}
	Having established the scheduling baseline, we evaluate the security enhancements provided by the Deceptive Signaling (DIS) module.
		
	\subsubsection{Utility Reduction under Channel Uncertainty}
	We evaluate the security performance of the DIS module. Fig. \ref{fig:attacker_utility_comparison} plots Attacker Utility against the Credibility Budget $C$, which limits deception to ensure the signal remains plausible. A higher $C$ allows more aggressive deception, effectively lowering the attacker's utility.
	
	We compare three scenarios: STAR-Only'' (no deception), Static Deception'', and our ``STARDIS''. STARDIS performs the best by using channel information. It saves its credibility budget by being truthful during poor channel conditions (deep fading) and concentrates deception when the signal is strong. As a result, at $C=0.2$, STARDIS reduces attacker utility by an extra 13.1\% over Static Deception and more than 35\% compared to the baseline.
	\begin{figure}[htbp!]
		\centering
		\includegraphics[width=0.9\columnwidth]{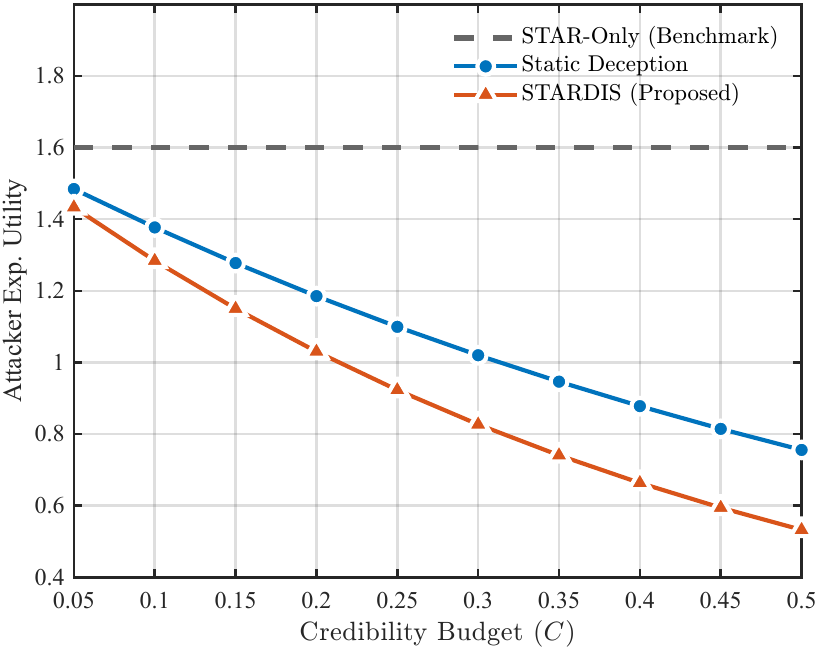}
		\caption{Attacker's Expected Utility vs. Credibility Budget. The STARDIS framework achieves lower attacker utility than static deception by opportunistically managing the credibility budget based on channel states.}
		\label{fig:attacker_utility_comparison}
		
	\end{figure}

	\subsubsection{Impact of Prior Beliefs and Channel-Aware Deception}
	Responding to the need to understand the robustness of Bayesian persuasion, Fig. \ref{fig:belief_sensitivity} explores the sensitivity of the defense to the Attacker's Prior Belief ($\mu_0$). We vary $\mu_0$—the attacker's initial probability estimate that the IDS is active—from 0.05 to 0.95.
	Unlike static defense models, the results reveal a critical \textbf{monotonic trend driven by deterrence}.
	As $\mu_0$ increases, the attacker perceives a higher risk of detection, which naturally reduces their expected utility (dashed gray line) even without active deception.
	
	The core contribution of STARDIS is demonstrated by the divergence between the channel-aware scenarios.
	Under \textbf{Good Channel Conditions} (High Credibility Budget, $C=0.5$, solid red line), the system can leverage high-fidelity deceptive signals.
	This capability allows STARDIS to significantly suppress the attacker's utility, especially in the low-belief region ($\mu_0 < 0.4$), effectively persuading the attacker to "wait" even when the system is actually vulnerable.
	In contrast, under \textbf{Poor Channel Conditions} (Low Credibility Budget, $C=0.1$, orange line), the defense is constrained by the need to maintain signal plausibility, resulting in a utility curve closer to the baseline.
	Simultaneously, the Defender's Utility (blue line) increases with $\mu_0$, reflecting the dual benefits of resource-saving (due to attacker inaction) and successful threat mitigation.
	This analysis confirms that STARDIS effectively converts the defender's information advantage into quantifiable security gains, scaling adaptively with the physical channel quality.
	\begin{figure}[htbp!]
		\centering
		\includegraphics[width=0.9\columnwidth]{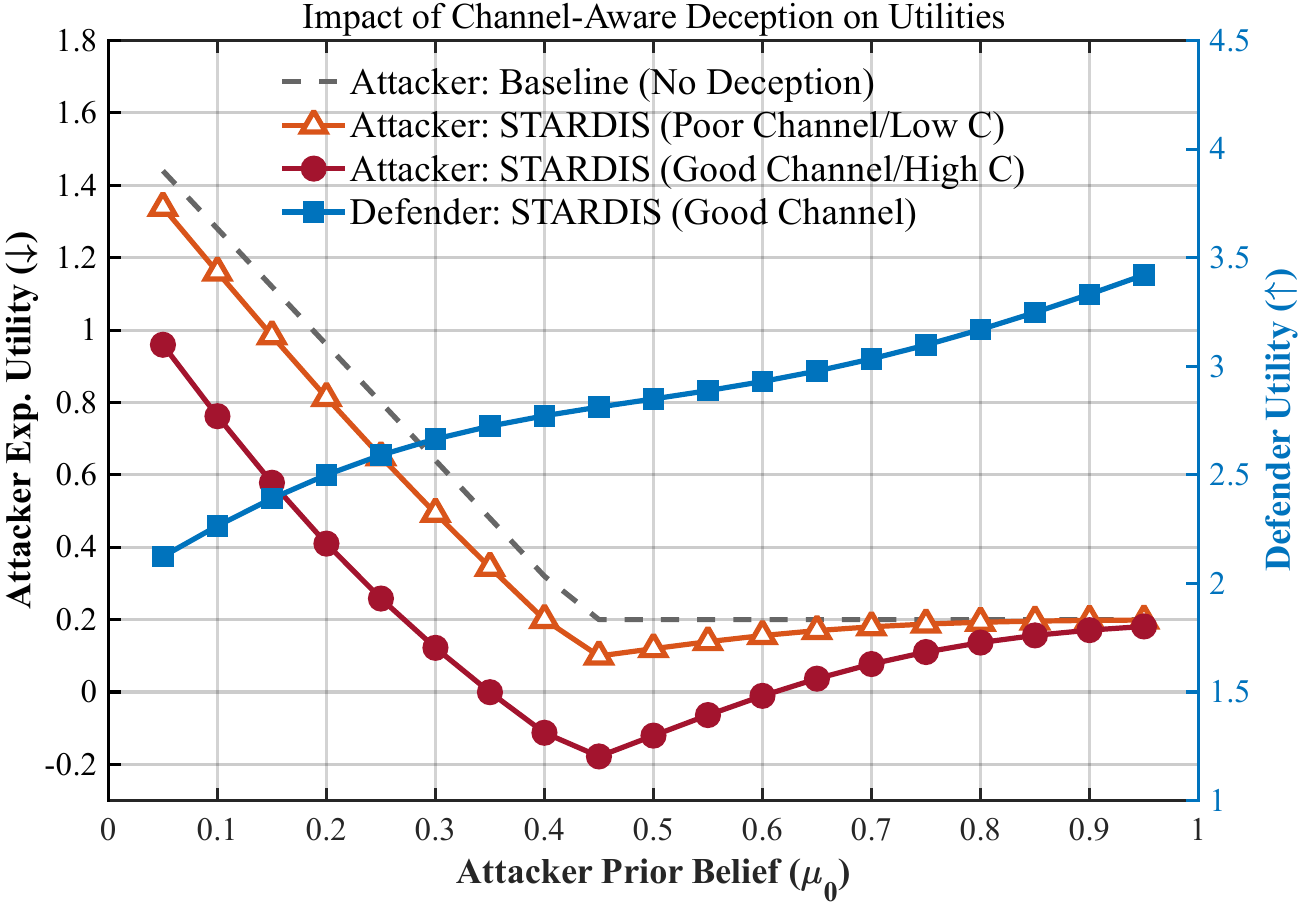}
		\caption{Sensitivity to Prior Beliefs.}
		\label{fig:belief_sensitivity}
		
	\end{figure}
	
	\subsubsection{Belief Evolution and Utility Suppression}
	\label{sec:belief_evolution}
	Fig. \ref{fig:belief_evolution} visualizes the dynamics of attacker belief and utility. During Channel Erasure (gray zones, e.g., $t \in [16, 19]$), the loss of telemetry resets the attacker's belief to the prior ($\mu_0=0.5$), which falls below the attack threshold ($\mu_{th}=0.55$). This triggers blind attacks that inevitably fail due to link severance, resulting in negative utility, as highlighted by the red-shaded areas. In the subsequent Recovery phases (white zones), the absence of deception exposes true resource dips (e.g., at $t \approx 21$), allowing the attacker to launch successful exploits with high utility peaks. In contrast, the Optimal Deception phases (green zones, e.g., $t \in [25, 36]$) demonstrate the core advantage of STARDIS. By strategically disseminating deceptive signals, the mechanism maintains the attacker's belief above the safety threshold ($\mu_t > 0.55$) even during actual vulnerability windows. Consequently, the attacker is persuaded to remain dormant, and their realizable utility is successfully suppressed to zero.
	
	\begin{figure}[htbp!]
		\centering
		\includegraphics[width=0.9\columnwidth]{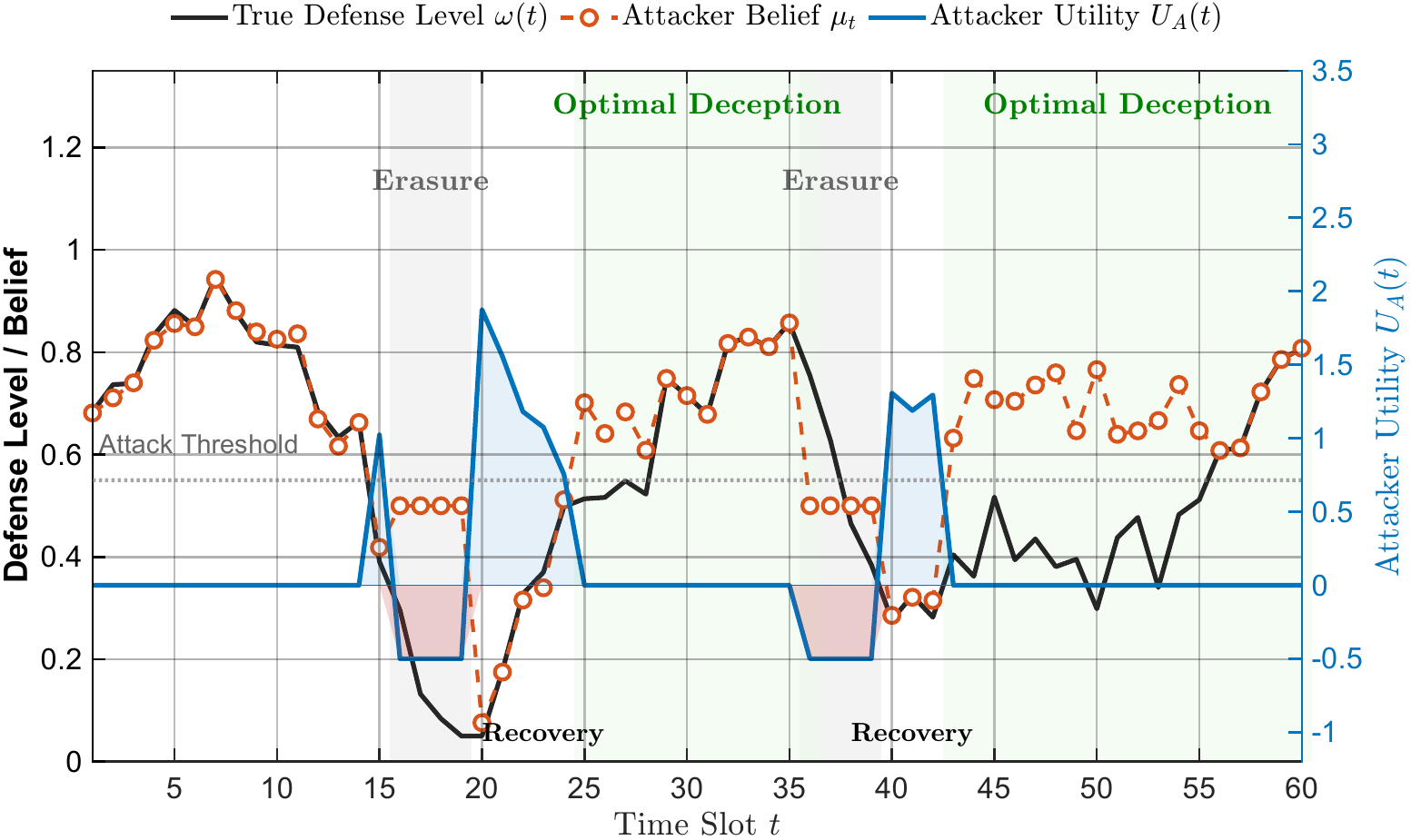} 
		\caption{Belief Evolution under Channel Erasures. The gray zones indicate periods where deep fading prevents the attacker from observing the signal, verifying the opportunity for predictive resource saving.}
		\label{fig:belief_evolution}
	\end{figure}
	\color{black}

	\subsection{Discussion}
{To address network-level constraints and enhance robustness, we may consider the role of Inter-Satellite Links (ISL) \cite{Kato2019}. In scenarios where the direct downlink is exposed to the attacker (e.g., high SNR condition), the satellite can leverage ISL to route sensitive security logs to a neighboring satellite that is currently in a "blind spot" relative to the attacker or has a poorer channel condition. This spatial decoupling effectively denies the attacker access to critical telemetry even when the primary link is physically vulnerable, validating the method's applicability to large-scale constellations.}

\section{Conclusion}

In this paper, we presented STARDIS, a proactive defense framework that resolves the inherent conflict between operational efficiency and robust security for resource-constrained satellite systems. Our approach uniquely distributes the defense problem: long-term statistical modeling is performed on the ground, while short-term, real-time scheduling and intrusion detection are handled onboard via receding horizon optimization. This creates an efficient but predictable defense posture. To counter the vulnerability of an external attacker intercepting the satellite's short-term plans, we introduced a strategic deception layer grounded in Bayesian persuasion with a constrained deception budget. This layer proactively manipulates the time-varying telemetry downlink by designing strategic signaling policies to mislead the adversary. We established the long-term effectiveness of this defense by constructing a Lyapunov function to prove its convergence to a state of minimized attacker utility. Future work will focus on extending this framework to coordinated constellation defense via Inter-Satellite Links and incorporating online learning to adaptively counter diverse adversarial behaviors.

\bibliographystyle{IEEEtran}
\bibliography{SIsecurity}

@inproceedings{Kaul2012,
  author    = {S. Kaul and R. Yates and M. Gruteser},
  title     = {Real-time status: How often should one update?},
  booktitle = {Proceedings of IEEE INFOCOM},
  year      = {2012},
  pages     = {2731--2735},
  month     = {March},
  doi       = {10.1109/INFCOM.2012.6195689}
}

@article{Abdi2003,
  author  = {A. Abdi and W. C. Lau and M. S. Alouini and M. Kaveh},
  title   = {A new simple model for land mobile satellite channels: {F}irst- and second-order statistics},
  journal = {IEEE Transactions on Wireless Communications},
  year    = {2003},
  volume  = {2},
  number  = {3},
  pages   = {519--528},
  doi     = {10.1109/TWC.2003.811182},
  month   = {May}
}

@article{Choi2024,
    author    = {C.-S. Choi},
    title     = {{Analysis of a Delay-Tolerant Data Harvest Architecture Leveraging Low Earth Orbit Satellite Networks}},
    journal   = {IEEE Journal on Selected Areas in Communications},
    volume    = {42},
    number    = {2},
    pages     = {335--347},
    year      = {2024},
    doi       = {10.1109/JSAC.2023.3341395}
}

@article{Torrens2024,
    author    = {S. A. Torrens and V. Petrov and J. M. Jornet},
    title     = {{Modeling Interference From Millimeter Wave and Terahertz Bands Cross-Links in Low Earth Orbit Satellite Networks for 6G and Beyond}},
    journal   = {IEEE Journal on Selected Areas in Communications},
    volume    = {42},
    number    = {5},
    pages     = {1371--1386},
    year      = {2024},
    doi       = {10.1109/JSAC.2024.3365894}
}

@article{Bergemann2016,
    author    = {D. Bergemann and S. Morris},
    title     = {{Information Design, Bayesian Persuasion, and Bayes Correlated Equilibrium}},
    journal   = {American Economic Review},
    volume    = {106},
    number    = {5},
    pages     = {586--591},
    year      = {2016},
    doi       = {10.1257/aer.p20161046}
}

@article{Carroll2011,
    author    = {T. E. Carroll and D. Grosu},
    title     = {{A game theoretic investigation of deception in network security}},
    journal   = {Security and Communication Networks},
    volume    = {4},
    number    = {10},
    pages     = {1162--1172},
    year      = {2011},
    doi       = {10.1002/sec.242}
}

@article{Clempner2018,
author = {Clempner, Julio B.},
title = {On Lyapunov Game Theory Equilibrium: Static and Dynamic Approaches},
journal = {International Game Theory Review},
volume = {20},
number = {02},
pages = {1750033},
year = {2018},
doi = {10.1142/S0219198917500335}
}

@article{Crawford1982,
    author    = {V. P. Crawford and J. Sobel},
    title     = {{Strategic Information Transmission}},
    journal   = {Econometrica},
    volume    = {50},
    number    = {6},
    pages     = {1431--1451},
    year      = {1982},
    doi       = {10.2307/1913390}
}

@techreport{5GAmericas2025NTN,
    author      = {{5G Americas}},
    title       = {{New Developments and Advances in 5G and Non-terrestrial Networks}},
    institution = {5G Americas},
    year        = {2025},
    month       = {February},
    url         = {https://www.5gamericas.org/wp-content/uploads/2025/02/WP_New-Developments-and-Advances-in-5G-and-NTN-.pdf}
}

@article{Hao2023,
    author    = {Y. Hao and Z. Song and Z. Zheng and Q. Zhang and Z. Miao},
    title     = {{Joint communication, computing, and caching resource allocation in LEO satellite MEC networks}},
    journal   = {IEEE Access},
    volume    = {11},
    pages     = {6708--6716},
    year      = {2023},
    doi       = {10.1109/ACCESS.2023.3237199}
}

@article{Huang2020,
    author    = {L. Huang and Q. Zhu},
    title     = {{A dynamic games approach to proactive defense strategies against advanced persistent threats in cyber-physical systems}},
    journal   = {Computers \& Security},
    volume    = {89},
    pages     = {101660},
    year      = {2020},
    doi       = {10.1016/j.cose.2019.101660}
}

@article{Huang2021,
    author    = {L. Huang and Q. Zhu},
    title     = {{Duplicity games for deception design with an application to insider threat mitigation}},
    journal   = {IEEE Transactions on Information Forensics and Security},
    volume    = {16},
    pages     = {4843--4856},
    year      = {2021},
    doi       = {10.1109/TIFS.2021.3118886}
}

@article{Javadpour2024,
    author    = {A. Javadpour and F. Ja'fari and T. Taleb and M. Shojafar and C. Benzaïd},
    title     = {{A comprehensive survey on cyber deception techniques to improve honeypot performance}},
    journal   = {Computers \& Security},
    volume    = {140},
    pages     = {103792},
    year      = {2024},
    doi       = {10.1016/j.cose.2024.103792}
}

@article{Kamenica2011,
    author    = {E. Kamenica and M. Gentzkow},
    title     = {{Bayesian Persuasion}},
    journal   = {American Economic Review},
    volume    = {101},
    number    = {6},
    pages     = {2590--2615},
    year      = {2011},
    doi       = {10.1257/aer.101.6.2590}
}

@book{Gentzkow2025,
    author    = {Gentzkow, Matthew and Kamenica, Emir},
    title     = {{Bayesian Persuasion}},
    publisher = {World Scientific},
    year      = {2025},
    doi       = {10.1142/14088},
    address   = {Singapore}
}

@article{Li2020,
    author    = {J. Li and K. Xue and D. Wei and J. Liu and Y. Zhang},
    title     = {{Energy efficiency and traffic offloading optimization in integrated satellite/terrestrial radio access networks}},
    journal   = {IEEE Transactions on Wireless Communications},
    volume    = {19},
    number    = {4},
    pages     = {2367--2381},
    year      = {2020},
    doi       = {10.1109/TWC.2019.2963177}
}

@article{Li2023a,
    author        = {Li, Xiuhong and Yang, Jiale and Fan, Huilong},
    title         = {{Dynamic Network Resource Autonomy Management and Task Scheduling Method}},
    journal       = {Mathematics},
    volume        = {11},
    year          = {2023},
    number        = {5},
    pages         = {1232},
    doi           = {10.3390/math11051232}
}

@article{Manulis2021,
   author={Manulis, Mark and Bridges, Charles P. and Harrison, Robert and Liskill, Joseph and Monaco, Louis and Rose, Thomas and Simon, Matthew and Wüest, Candid},
   title={Cyber security in New Space},
   journal={International Journal of Information Security},
   year={2021},
   month={Jun},
   volume={20},
   number={3},
   pages={287--311},
   doi={10.1007/s10207-020-00503-w},
   issn={1615-5270},
   publisher = {Springer Science and Business Media LLC}
}

@article{Liu2016,
    author  = {Liu, Jianwei and Liu, Weiran and Wu, Qianhong and Li, Dawei and Chen, Shigang},
    title   = {{Survey on key security technologies for space information networks}},
    journal = {Journal of Communications and Information Networks},
    year    = {2016},
    volume  = {1},
    number  = {1},
    pages   = {72--85},
    month   = jun,
    doi     = {10.1007/BF03391547}
}

@inproceedings{Ma2022,
    author    = {D. Ma and Z. Tang and X. Sun and L. Guo and L. Wang and K. Chen},
    title     = {{Game Theory Approaches for Evaluating the Deception-based Moving Target Defense}},
    booktitle = {Proceedings of the 9th ACM Workshop on Moving Target Defense},
    pages     = {67--78},
    year      = {2022},
    publisher = {ACM},
    address   = {New York, NY, USA},
    doi       = {10.1145/3560827.3563371}
}

@article{Mao2024,
    author    = {B. Mao and X. Zhou and J. Liu and N. Kato},
    title     = {{On an Intelligent Hierarchical Routing Strategy for Ultra-Dense Free Space Optical Low Earth Orbit Satellite Networks}},
    journal   = {IEEE Journal on Selected Areas in Communications},
    volume    = {42},
    number    = {2},
    pages     = {348--363},
    year      = {2024},
    doi       = {10.1109/JSAC.2023.3341397}
}

@misc{Neely2010,
      title={Queue Stability and Probability 1 Convergence via Lyapunov Optimization}, 
      author={Michael J. Neely},
      year={2010},
      eprint={1008.3519},
      archivePrefix={arXiv},
      primaryClass={math.OC},
      url={https://arxiv.org/abs/1008.3519}, 
}

@article{Rodrigues2023,
    author    = {T. K. Rodrigues and N. Kato},
    title     = {{Hybrid Centralized and Distributed Learning for MEC-Equipped Satellite 6G Networks}},
    journal   = {IEEE Journal on Selected Areas in Communications},
    volume    = {41},
    number    = {5},
    pages     = {1446--1458},
    year      = {2023},
    doi       = {10.1109/JSAC.2023.3242761}
}

@article{Roy2024,
    author    = {S. Roy and S. Sankaran and M. Zeng},
    title     = {{Green Intrusion Detection Systems: A Comprehensive Review and Directions}},
    journal   = {Sensors},
    volume    = {24},
    number    = {17},
    pages     = {5516},
    year      = {2024},
    doi       = {10.3390/s24175516}
}

@inproceedings{Sai2021,
    author    = {K. M. Sai and B. B. Gupta and C.-H. Hsu and D. Perakovi{\'c}},
    title     = {{Lightweight Intrusion Detection System In IoT Networks Using Raspberry pi 3b+}},
    booktitle = {CEUR Workshop Proceedings},
    volume    = {3080},
    pages     = {58--67},
    year      = {2021}
}

@incollection{Sayin2021,
    author    = {M. O. Sayin and T. Ba\c{s}ar},
    title     = {{Deception-as-Defense Framework for Cyber-Physical Systems}},
    booktitle = {Safety, Security and Privacy for Cyber-Physical Systems},
    editor    = {R. M. G. Ferrari and A. M. H. Teixeira},
    publisher = {Springer},
    address   = {Cham},
    year      = {2021},
    pages     = {287--317},
    doi       = {10.1007/978-3-030-62133-8_11}
}

@article{Sedjelmaci2024,
    author    = {H. Sedjelmaci and N. Kaaniche and A. Boudguiga and N. Ansari},
    title     = {{Secure Attack Detection Framework for Hierarchical 6G-Enabled Internet of Vehicles}},
    journal   = {IEEE Transactions on Vehicular Technology},
    volume    = {73},
    number    = {2},
    pages     = {2633--2642},
    year      = {2024},
    doi       = {10.1109/TVT.2023.3317940}
}

@article{Sprunt1989,
    author    = {B. Sprunt and L. Sha and J. Lehoczky},
    title     = {{Aperiodic task scheduling for Hard-Real-Time systems}},
    journal   = {Real-Time Systems},
    volume    = {1},
    number    = {1},
    pages     = {27--60},
    year      = {1989},
    doi       = {10.1007/BF02341920}
}

@article{Sun2025,
    author    = {Y. Sun and J. Lu and D. W. C. Ho and L. Li},
    title     = {{Real-time Estimation of DoS Duration and Frequency for Security Control}},
    journal   = {IEEE Transactions on Automatic Control},
    year      = {2025},
    note      = {Early Access},
    doi       = {10.1109/TAC.2025.3538738}
}

@techreport{Swope2025,
    author      = {Swope, Clayton and Bingen, Kari A. and Young, Makena and LaFave, Kendra},
    title       = {{Space Threat Assessment 2025}},
    institution = {Center for Strategic and International Studies},
    year        = {2025},
    month       = apr,
    address     = {Washington, DC},
    url         = {https://www.csis.org/analysis/space-threat-assessment-2025},
    urldate     = {2025-07-10}
}

@book{Tambe2011,
    author    = {M. Tambe},
    title     = {{Security and Game Theory: Algorithms, Deployed Systems, Lessons Learned}},
    publisher = {Cambridge University Press},
    address   = {Cambridge, UK},
    year      = {2011}
}

@inproceedings{Xu2015,
    author    = {H. Xu and Z. Rabinovich and S. Dughmi and M. Tambe},
    title     = {{Exploring information asymmetry in two-stage security games}},
    booktitle = {Proceedings of the Twenty-Ninth AAAI Conference on Artificial Intelligence},
    pages     = {1057--1063},
    year      = {2015},
    publisher = {AAAI Press},
    address   = {Palo Alto, CA, USA}
}

@inproceedings{Yu2024,
    author    = {Yu, Lingjing and Hao, Jingli and Ma, Jun and Sun, Yong and Zhao, Yijun and Luo, Bo},
    title     = {{A Comprehensive Analysis of Security Vulnerabilities and Attacks in Satellite Modems}},
    booktitle = {Proceedings of the 2024 ACM SIGSAC Conference on Computer and Communications Security},
    year      = {2024},
    publisher = {Association for Computing Machinery},
    address   = {New York, NY, USA},
    pages     = {3287--3301},
    series    = {CCS '24},
    doi       = {10.1145/3658644.3670390}
}

@article{Zheng2018,
    author  = {Zheng, Zixuan and Guo, Jian and Gill, Eberhard},
    title   = {{Onboard autonomous mission re-planning for multi-satellite system}},
    journal = {Acta Astronautica},
    volume  = {145},
    pages   = {28--43},
    year    = {2018},
    doi     = {10.1016/j.actaastro.2018.01.017}
}

@inproceedings{Zhu2013,
    author    = {Zhu, Quanyan and Ba{\c{s}}ar, Tamer},
    editor    = {Das, Sajal K. and Nita-Rotaru, Cristina and Kantarcioglu, Murat},
    title     = {{Game-Theoretic Approach to Feedback-Driven Multi-stage Moving Target Defense}},
    booktitle = {Decision and Game Theory for Security},
    series    = {Lecture Notes in Computer Science},
    volume    = {8252},
    publisher = {Springer International Publishing},
    year      = {2013},
    pages     = {246--263},
    doi       = {10.1007/978-3-319-02786-9_15}
}

@article{Yao2025,
  author={Yao, Su and Lin, Yiying and Wang, Mu and Xu, Ke and Xu, Mingwei and Xu, Changqiao and Zhang, Hongke},
  journal={IEEE Journal on Selected Areas in Communications}, 
  title={LEOEdge: A Satellite-Ground Cooperation Platform for the AI Inference in Large LEO Constellation}, 
  year={2025},
  volume={43},
  number={1},
  pages={36-50},
  doi={10.1109/JSAC.2024.3460083}}

@article{Yip2023,
title={Performance Assessment of LPD/LPI Satellite Communication Systems},
author={L. Yip},
journal={2023 IEEE Aerospace Conference},
year={2023},
pages={1-7},
doi={10.1109/AERO55745.2023.10115682}
}

@article{Saha2019,
title={Ensuring Cybersecure Telemetry and Telecommand in Small Satellites: Recent Trends and Empirical Propositions},
author={Swapnil Sayan Saha and S. Rahman and Mosabber Uddin Ahmed and S. K. Aditya},
journal={IEEE Aerospace and Electronic Systems Magazine},
year={2019},
volume={34},
pages={34-49},
doi={10.1109/MAES.2019.2927852}
}

@ARTICLE{He2025,
  author={He, Junpeng and Li, Xiong and Zhang, Xiaosong and Niu, Weina and Li, Fagen},
  journal={IEEE Transactions on Information Forensics and Security}, 
  title={A Synthetic Data-Assisted Satellite Terrestrial Integrated Network Intrusion Detection Framework}, 
  year={2025},
  volume={20},
  number={},
  pages={1739-1754},
  doi={10.1109/TIFS.2025.3530676}}

@ARTICLE{Guo2022,
  author={Guo, Hongzhi and Li, Jingyi and Liu, Jiajia and Tian, Na and Kato, Nei},
  journal={IEEE Communications Surveys \& Tutorials}, 
  title={A Survey on Space-Air-Ground-Sea Integrated Network Security in 6G}, 
  year={2022},
  volume={24},
  number={1},
  pages={53-87},
  doi={10.1109/COMST.2021.3131332}}

@ARTICLE{Wan2023,
  author={Wan, Zelin and Cho, Jin-Hee and Zhu, Mu and Anwar, Ahmed H. and Kamhoua, Charles A. and Singh, Munindar P.},
  journal={IEEE Transactions on Network and Service Management}, 
  title={Resisting Multiple Advanced Persistent Threats via Hypergame-Theoretic Defensive Deception}, 
  year={2023},
  volume={20},
  number={3},
  pages={3816-3830},
  doi={10.1109/TNSM.2023.3240366}}

@ARTICLE{Wei2025,
  author={Wei, Shuheng and Wu, Zaijun and Xu, Junjun and Cheng, Yanzhe and Hu, Qinran},
  journal={Journal of Modern Power Systems and Clean Energy}, 
  title={Security Risk Assessment and Risk-oriented Defense Resource Allocation for Cyber-physical Distribution Networks Against Coordinated Cyber Attacks}, 
  year={2025},
  volume={13},
  number={1},
  pages={312-324},
  doi={10.35833/MPCE.2024.000288}}

@ARTICLE{Tang2021,
  author={Tang, Qingqing and Fei, Zesong and Li, Bin and Han, Zhu},
  journal={IEEE Internet of Things Journal}, 
  title={Computation Offloading in LEO Satellite Networks With Hybrid Cloud and Edge Computing}, 
  year={2021},
  volume={8},
  number={11},
  pages={9164-9176},
  doi={10.1109/JIOT.2021.3056569}}

@article{Xiu2025,
title = {Computation offloading and resource allocation in satellite edge computing networks: A multi-agent reinforcement learning approach},
journal = {Computer Networks},
volume = {272},
pages = {111680},
year = {2025},
issn = {1389-1286},
doi = {https://doi.org/10.1016/j.comnet.2025.111680},
url = {https://www.sciencedirect.com/science/article/pii/S1389128625006474},
author = {Qingxiao Xiu and Jun Liu and Xiangjun Liu and Jintao Wang}
}

@ARTICLE{Kato2019,
  author={Kato, Nei and Fadlullah, Zubair Md. and Tang, Fengxiao and Mao, Bomin and Tani, Shigenori and Okamura, Atsushi and Liu, Jiajia},
  journal={IEEE Wireless Communications}, 
  title={Optimizing Space-Air-Ground Integrated Networks by Artificial Intelligence}, 
  year={2019},
  volume={26},
  number={4},
  pages={140-147},
  doi={10.1109/MWC.2018.1800365}}

@article{Tedeschi2022,
title = {Satellite-based communications security: A survey of threats, solutions, and research challenges},
journal = {Computer Networks},
volume = {216},
pages = {109246},
year = {2022},
issn = {1389-1286},
doi = {https://doi.org/10.1016/j.comnet.2022.109246},
url = {https://www.sciencedirect.com/science/article/pii/S138912862200319X},
author = {Pietro Tedeschi and Savio Sciancalepore and Roberto {Di Pietro}}
}

@ARTICLE{Huang2025,
  author={Huang, Linan and Liu, Peilong and Chen, Xu and Jiang, Chunxiao and Kuang, Linling and Lu, Jianhua},
  journal={IEEE Internet of Things Journal}, 
  title={A Consolidated Game Framework for Cooperative Defense Against Cross-Domain Cyber Attacks in Satellite-Enabled Internet of Things}, 
  year={2025},
  volume={12},
  number={9},
  pages={12853-12868},
  doi={10.1109/JIOT.2024.3522558}}

\vfill

\end{document}